\documentclass[a4paper,UKenglish,cleveref, autoref, thm-restate]{lipics-v2021}
\usepackage{amsmath,amssymb,amsthm}
\usepackage{algorithm}
\usepackage{algpseudocode}
\usepackage{graphicx}
\usepackage{url}
\usepackage{xparse}
\usepackage{tikz}
\usetikzlibrary{arrows.meta,positioning}

\NewDocumentEnvironment{informaltheorem}{m}
{\par\medskip\noindent\textbf{Theorem~\ref{#1} (Informal). }\itshape}
{\par\medskip}

\NewDocumentEnvironment{informalcor}{m}
{\par\medskip\noindent\textbf{Corollary~\ref{#1} (Informal). }\itshape}
{\par\medskip}

\newcommand{\Proj}{\mathsf{Project}}
\newcommand{\OPT}{\mathsf{OPT}}
\newcommand{\Cmax}{C_{\max}}
\newcommand{\move}{\mathsf{move}}

\newcommand{\Z}{\mathbb{Z}}
\newcommand{\Ind}{\mathbf{1}}
\newcommand{\Err}{\mathcal{E}}
\newcommand{\ErrJ}{\mathcal{E}_J}
\newcommand{\DP}{\mathsf{DP}}
\newcommand{\cost}{\mathsf{cost}}
\newcommand{\Out}{\mathsf{Out}}
\newcommand{\In}{\mathsf{In}}
\newcommand{\RepairOracle}{\mathsf{GlobalRepairOracle}}
\newcommand{\ExactRepairOracle}{\mathsf{ExactGlobalRepairOracle}}
\newcommand{\bigO}{\mathcal{O}}
\newcommand{\moveJobs}{\mathsf{moveJobs}}
\newcommand{\RepairOracleJ}{\mathsf{GlobalRepairOracle}_J}

\title{From Estimates to Schedules: 
Learning-Augmented Restricted Assignment}

\author{Michalis Xefteris}{University of Bremen, Germany}{xefteris@uni-bremen.de}{}{}

\authorrunning{M. Xefteris} %TODO mandatory. First: Use abbreviated first/middle names. Second (only in severe cases): Use first author plus 'et al.'

\Copyright{Michalis Xefteris} %TODO mandatory, please use full first names. LIPIcs license is "CC-BY";  http://creativecommons.org/licenses/by/3.0/

\ccsdesc[500]{Theory of computation~Design and analysis of algorithms}

\keywords{learning-augmented algorithms, scheduling, approximation, parameterized complexity}

\category{} %optional, e.g. invited paper

\relatedversion{} %optional, e.g. full version hosted on arXiv, HAL, or other respository/website
\nolinenumbers %uncomment to disable line numbering

\EventEditors{}
\EventNoEds{1}
\EventLongTitle{}
\EventShortTitle{}
\EventAcronym{}
\EventYear{}
\EventDate{}
\EventLocation{}
\EventLogo{}
\SeriesVolume{}
\ArticleNo{}
\begin{document}

\maketitle

%TODO mandatory: add short abstract of the document
\begin{abstract}
In this work, we study \textsc{Restricted Assignment} scheduling on multiple machines, where each job can be processed only on a specified subset of machines and the objective is to minimize the makespan. We introduce a learning-augmented setting in which a possibly infeasible predicted assignment is provided. The prediction error (moved-load) is measured by the total processing volume that must be reassigned in order to obtain an optimal feasible schedule.

Using a single prediction, we obtain two types of guarantees. First, we design an algorithm whose approximation ratio degrades smoothly with the prediction error while retaining a worst-case guarantee independent of the prediction quality. More precisely, for any fixed constant, we can make the additive dependence on the prediction error arbitrarily small, at the cost of increasing the polynomial running time. This guarantee can also be combined with any approximation algorithm for the problem without predictions to obtain robustness.

Second, given a makespan estimate, we provide a repair procedure that returns a schedule matching this estimate in time parameterized by the prediction error. This allows the algorithm to exploit the separation between estimation and approximation algorithms for \textsc{Restricted Assignment}. Finally, we complement the repair algorithm with a parameterized hardness result, showing that exact moved-load repair with a given target makespan is $\mathrm{W[1]}$-hard when parameterized by the amount of moved-load.
\end{abstract}

\newpage

\section{Introduction}
Recent advances in machine learning (ML) have stimulated a growing interest in using predictions within algorithm design. Learning from historical data can provide useful insight into different aspects of an optimization problem, such as structural properties of an instance, features of a high-quality solution, partial solutions, or near-optimal algorithmic decisions. This has motivated the study of \emph{learning-augmented algorithms} (or \emph{algorithms with predictions}) which aim to leverage such information to achieve improved performance when the predictions are accurate, while still maintaining rigorous worst-case guarantees when they are not. In this setting, predictions about the problem at hand are typically provided in a black-box manner.

This line of research originated in the work of Mahdian et al.~\cite{est} on online advertising allocation and was later formalized by
Lykouris and Vassilvitskii~\cite{lykouris} for online caching. The goals that
have since become standard are that an algorithm should be (i)~\emph{consistent},
performing near-optimally when predictions are accurate, (ii)~\emph{robust}, preserving
near-optimal worst-case guarantees when predictions are poor, and (iii)~\emph{smooth},
with performance degrading gradually as prediction error increases. The field has blossomed with applications to online algorithms, streaming and sketching algorithms, data structures and combinatorial optimization. An almost complete list of papers in this area can be found in~\cite{alps}.

Algorithms with predictions have also been used to improve approximation algorithms for NP-hard problems, with clustering emerging as one of the main directions. For $k$-means clustering, learning-augmented algorithms
that use predicted cluster labels have been developed in~\cite{ergun,gamlath,nguyenimproved},
achieving a $(1+\bigO(\alpha))$-approximation when the per-cluster label
error is at most~$\alpha$. Hierarchical clustering has also been studied in the learning-augmented setting~\cite{hierarchical}.
Antoniadis et al.~\cite{antoniadis} studied a broad class of NP-hard selection problems, using possibly infeasible predictions to obtain improved approximation guarantees when the predictions are sufficiently accurate. Another well-studied direction concerns Max-CSP problems, including Max-Cut. Bampis et al.~\cite{bampis} considered dense Max-CSPs and showed how predictions can be used to speed up polynomial-time approximation schemes (PTAS).
Max-Cut has also been investigated in a learning-augmented setting with noisy predictions that are slightly better than random guesses.
In two independent works, Cohen-Addad et al.~\cite{cohen-addad} and Ghoshal et al.~\cite{makarychev} improved the approximation guarantees for Max-Cut in this setting. The same model has also been applied to several other NP-hard problems~\cite{braverman,bampis_2,aamand}, yielding better guarantees than those achievable without predictions.

In this work, we study offline scheduling on unrelated parallel machines in the learning-augmented setting.  Scheduling is a fundamental optimization problem with applications in manufacturing, transportation, and computing systems. More specifically, we consider the \textsc{Restricted Assignment} problem, in which each job can be processed only on a specified subset of machines, and the objective is to minimize the makespan. For the problem without predictions, no polynomial-time algorithm can achieve a worst-case approximation ratio better than $3/2$ unless $\mathrm{P}=\mathrm{NP}$~\cite{LST90}.

While learning-augmented algorithms for online scheduling have been studied extensively~\cite{online_1,online_2}, the offline setting has received much less attention. Bampis et al.~\cite{bampis_2} studied a single-machine scheduling problem with the objective of minimizing the sum of completion times. Given predictions about the relative order of jobs that are correct with probability slightly greater than $1/2$, they designed an exact polynomial-time algorithm for the problem. Moseley et al.~\cite{moseley} studied the problem of scheduling stochastic jobs on a single machine when the scheduler is given only predicted job-size distributions, which may be only approximately correct, rather than the true distributions. 

\subsection{Contributions}

In this work, we are given a predicted assignment of $n$ jobs to $m$ machines. We stress that we do not require this predicted assignment to be feasible.
The prediction error $\Err$ is just the minimum total processing time (moved-load) that must be reassigned from the predicted job-to-machine assignment in order to obtain a feasible schedule with optimal makespan $\OPT$. If there are several optimal solutions, $\Err$ is the minimum such total among them. A formal description of the prediction model is given in Section~\ref{sec:prelim}.

Our first goal is to improve the approximation ratio of the problem using the possibly infeasible prediction $\hat\pi$. To this end, we first project $\hat\pi$ into a feasible schedule $\hat\sigma$, obtaining a basic projection guarantee with makespan at most $\OPT+\Err$.
Building on this projection, we derive our smoothness guarantee by guessing the few large jobs on which the prediction differs from an optimal schedule and solving the assignment LP for the remaining jobs.
More precisely, for every fixed $\delta\in(0,1]$, there is an algorithm running in time
$(nm)^{\bigO(1/\delta)}$ that outputs a feasible schedule $\sigma$ with makespan at most $\OPT+\delta\Err$.
We then robustify this approach by
running in parallel an $\alpha$-approximation algorithm for the problem without
predictions and returning the better of the two schedules, obtaining the
following guarantee.
\begin{informaltheorem}{theo:rob}
For every fixed $\delta\in(0,1]$, given a predicted assignment $\hat\pi$ and an $\alpha$-approximation algorithm for \textsc{Restricted Assignment} without predictions, one can output a schedule $\sigma$ in time $(nm)^{\bigO(1/\delta)}$ with makespan
$\Cmax(\sigma)\le \min\left\{\OPT+\delta \Err,\,\alpha \cdot \OPT\right\}$.
\end{informaltheorem}

A main novelty of our work is that we use the same prediction and the same error measure, in two distinct ways. First, we use it to obtain smooth and robust approximation guarantees. Second, we use the same prediction to obtain repair algorithms whose running time is parameterized by the prediction error. Prior learning-augmented approximation algorithms for offline NP-hard problems typically use predictions either to improve the approximation guarantee in polynomial time, or to speed up a known algorithmic scheme. In contrast, our framework connects these two uses of predictions within a single model. More specifically, we show that, given an estimate $T$ of the optimal makespan, one can construct a repair procedure that takes the prediction $\hat\pi$ and outputs a schedule with makespan at most $T$ in time parameterized by the prediction error $\Err$ (Section~\ref{sec:repair-load}). At a high level, the algorithm first guesses the prediction error via a doubling search, then identifies by exhaustive enumeration the subset of machines that must be repaired, and finally solves the repair problem on a fixed set of machines using a dynamic programming approach inspired by multi-dimensional knapsack. 
\begin{informalcor}{cor:time-load}
There is a repair procedure that, given a predicted assignment $\hat\pi$, returns a feasible schedule $\sigma'$ for \textsc{Restricted Assignment} with
$\Cmax(\sigma')\le T$ in time
$\widetilde{\bigO}\!\left(
n \cdot m^{\bigO(\Err)} \cdot \Err^{\bigO(\Err)}
\right)$, where $\widetilde{\bigO}(\cdot)$ hides logarithmic factors.
\end{informalcor}

We complement this algorithmic result with a parameterized hardness result. Exact moved-load repair for \textsc{Restricted Assignment}, with a given target makespan, is $\mathrm{W[1]}$-hard when parameterized by the amount of moved-load (Section~\ref{subsec:repair-hardness}). Thus, under standard parameterized complexity assumptions, the error-dependent running time cannot in general be replaced by an $\mathrm{FPT}$ dependence on the moved-load alone.
Together, these results show that both parts of our framework are close to best possible: the smoothness guarantee can make the coefficient of $\Err$ arbitrarily small at the cost of a $(nm)^{\bigO(1/\delta)}$ running time, while exact repair is unlikely to admit an $\mathrm{FPT}$ dependence on the prediction error alone.

\paragraph*{Estimation vs Approximation.}
This repair approach is useful when the makespan estimate $T$ yields a better makespan than the robust guarantee of Theorem~\ref{theo:rob}. 
For many NP-hard problems studied in the literature, however, the best estimation ratio achievable in polynomial time coincides with the best approximation ratio, and therefore such an estimate does not lead to any additional benefit. \textsc{Restricted Assignment} is one of the few known exceptions.\footnote{Under standard complexity assumptions, it has been shown that there are (artificial) NP-hard optimization problems for which estimation is easier than approximation~\cite{feige}.}
Recall that a polynomial-time algorithm has estimation ratio $\rho$ if it computes a value $T$ within a factor $\rho$ of the optimum ($\OPT \le T \le \rho \cdot \OPT$). For \textsc{Restricted Assignment}, the best known approximation ratio is $\alpha=2$~\cite{LST90}, whereas the best known estimation ratio is $11/6$~\cite{lars_2}. In Section~\ref{sec:applications}, we instantiate our guarantees for \textsc{Restricted Assignment} and for its two-valued variant, in which each job has one of two possible processing times.

At the same time, the moved-load error can be pessimistic when the prediction makes only a few mistakes on large jobs. To address this, we develop a second repair framework whose running time depends on the number of moved jobs (moved-job error $\ErrJ$) rather than on the moved-load error (see Appendix~\ref{sec:repair_jobs}).
The aforementioned hardness construction also implies that exact moved-job repair for \textsc{Restricted Assignment} is $\mathrm{W[1]}$-hard when parameterized by the number of moved jobs.
For \textsc{Restricted Assignment}, the method follows the same repair framework as before. Compared with the moved-load repair, it incurs an additional $\varepsilon T$ loss in the makespan, arising from the necessary rounding of the processing times. For the two-valued case, we refine this approach by separating big and small jobs: we use a dynamic program for the big jobs and a min-cost flow formulation for the small jobs, which removes the $(1+\varepsilon)$ loss and yields an exact repair with makespan at most $T$. Both repair procedures have similar worst-case running-time guarantees as in Corollary~\ref{cor:time-load}.

\paragraph*{About the prediction model.}
In line with many works on learning-augmented offline optimization, our algorithms rely on having access to predictions for a computationally hard optimization problem. It is not yet clear whether current ML methods can reliably provide accurate predictions about such problems, though a growing body of work in machine learning for combinatorial optimization seeks to learn structural information about high-quality solutions~\cite{bengio,combinatorial}. In our setting, this assumption is less restrictive, since the prediction does not need to be a feasible schedule. It only needs to capture useful information about the assignment structure, and feasibility is enforced afterward by our algorithm. Therefore, in application domains with recurring structure, such predictions may be realistic in practice.

\section{Preliminaries} \label{sec:prelim}

We consider \textsc{Restricted Assignment} scheduling on $m$ parallel machines and $n$ jobs. Each job $j\in[n]$ has processing time $p_j>0$ and a set of eligible machines $\Gamma(j)\subseteq[m]$.
A schedule is feasible if $\sigma(j)\in\Gamma(j)$ for all jobs.
For a feasible schedule $\sigma$, we define
$L_i(\sigma) = \sum_{j: \sigma(j)=i} p_j$ and
$\Cmax(\sigma) = \max_{i\in[m]}L_i(\sigma)$.
The goal of the problem is to find a feasible schedule that minimizes the makespan. 

Let $\sigma^*$ be a feasible schedule that achieves the optimal makespan $\OPT$. More specifically,
$\OPT = \min\{\Cmax(\sigma): \sigma \text{ feasible}\}$.
Throughout the paper, we assume that every input instance of the problem is feasible, i.e., $\Gamma(j)\neq\emptyset$ for all jobs $j$. Moreover, we assume that all processing times are integers: $p_j\in\Z_{\ge 1}$ for all jobs $j$. If processing times are rational, one can scale them to integers, which affects the moved-load
error $\Err$ (and hence the corresponding running times) only by the
scaling factor and preserves the moved-job error $\ErrJ$.

\subsection{Prediction Model}
For every instance of the problem, we are given a predicted assignment $\hat\pi:[n]\to[m]$ of jobs to machines, which may be infeasible, i.e., $\hat\pi(j)\notin\Gamma(j)$ for some jobs.

Next, we define the prediction error used in this work\footnote{We also consider an alternative error measure based on the number of moved jobs. Its formal definition is deferred to Appendix~\ref{sec:repair_jobs}.}. Intuitively, it measures the total load (accumulated processing time) that must be reassigned to transform a predicted job-to-machine mapping into an (optimal) assignment. First, let us define the moved-load.

\begin{definition}[Moved-load] \label{def:load}
For a feasible schedule $\sigma$ and any mapping $\tau:[n]\to[m]$ (feasible or infeasible), define:
\[
M(\sigma,\tau) = \{j\in[n]: \sigma(j)\neq \tau(j)\},
\qquad
\move(\sigma,\tau) = \sum_{j\in M(\sigma,\tau)} p_j.
\]
\end{definition}

Now we are ready to define the prediction error with respect to an optimal schedule for an instance of the problem. 

\begin{definition}[Prediction error wrt $\OPT$] \label{def:error}
Define the prediction error with respect to an optimal solution by
\[
 \Err = \Err(\OPT) = \min\{\move(\sigma^*,\hat\pi): \sigma^* \text{ feasible and } \Cmax(\sigma^*)=\OPT\}.
\]
\end{definition}

We will also need to define the prediction error required to obtain a feasible schedule of makespan at most $T$. Throughout the paper, we assume that $T \ge \OPT$.

\begin{definition}[Prediction error wrt $T$]
Fix $T$. The minimum moved-load needed to reach makespan at most $T$ from a predicted assignment $\hat\pi$ is
\[
\Err(T) =\min\{\move(\sigma,\hat\pi): \sigma \text{ feasible and } \Cmax(\sigma)\le T\}.
\]
\end{definition}

\section{Error-Dependent Approximation Guarantees}
\label{sec:approx}

In this section, we use the prediction to obtain approximation guarantees that degrade smoothly with the moved-load error $\Err$. We first project the possibly infeasible prediction to a feasible schedule $\hat\sigma=\Proj(\hat\pi)$, which already gives a basic projection guarantee. We then combine this projection with a guessing-and-rounding procedure to obtain our smoothness guarantee.

The first step is very simple: we project $\hat\pi$ to a feasible schedule.
Specifically, for a fixed arbitrary ordering $\pi_g$ of the jobs, the procedure $\Proj_{\pi_g}(\hat\pi)$ processes jobs one by one. If $\hat\pi(j)\in\Gamma(j)$, it keeps the predicted assignment, and otherwise it reassigns $j$ to the least-loaded eligible machine (breaking ties arbitrarily). The output is a feasible schedule $\hat\sigma$.

In the rest of the paper, we omit explicit reference to the order $\pi_g$ and write
$\hat\sigma=\Proj(\hat\pi)$, using a fixed arbitrary order. Moreover, unless stated otherwise, we use the projected predicted schedule $\hat\sigma$.

\subsection{Smooth Approximation}

We begin with the projection step. First, we prove that $\Proj(\hat\pi)$ does not increase the moved-load error.

\begin{lemma}[Projection monotonicity (moved-load)]\label{lem:proj}
For every feasible schedule $\sigma$, we have $\move(\sigma,\hat\sigma)\le \move(\sigma,\hat\pi)$, where $\hat\sigma=\Proj(\hat\pi)$.
\end{lemma}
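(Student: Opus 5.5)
The plan is a job-by-job comparison: I will show $M(\sigma,\hat\sigma)\subseteq M(\sigma,\hat\pi)$, which gives the claim immediately. Since $\move(\cdot,\cdot)$ is a sum of the nonnegative weights $p_j$ over the disagreement set $M(\cdot,\cdot)$, set inclusion implies $\move(\sigma,\hat\sigma)\le\move(\sigma,\hat\pi)$. No load argument is needed, and the choice of machine made by $\Proj$ is irrelevant.

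To prove the inclusion, fix a feasible schedule $\sigma$ and a job $j\notin M(\sigma,\hat\pi)$, so that $\sigma(j)=\hat\pi(j)$. Because $\sigma$ is feasible, $\sigma(j)\in\Gamma(j)$, and hence $\hat\pi(j)\in\Gamma(j)$. By the definition of $\Proj$, a job whose predicted machine is eligible keeps its predicted assignment, whatever the processing order $\pi_g$ and the current loads. Thus $\hat\sigma(j)=\hat\pi(j)=\sigma(j)$, and $j\notin M(\sigma,\hat\sigma)$. Taking the contrapositive yields the inclusion.

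The only reassigned jobs are those with $\hat\pi(j)\notin\Gamma(j)$. Every such job already lies in $M(\sigma,\hat\pi)$, since $\sigma(j)\in\Gamma(j)$ forces $\sigma(j)\neq\hat\pi(j)$. Moving such a job to an arbitrary eligible machine therefore can only remove it from the disagreement set, never add a new one.

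There is essentially no obstacle here. The one point to state explicitly is that the rule "keep $\hat\pi(j)$ if eligible" does not depend on the order in which jobs are processed. This is why the lemma holds for every feasible $\sigma$ simultaneously, including an optimal $\sigma^*$ attaining $\Err$.
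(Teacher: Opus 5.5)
Your proof is correct and is essentially the paper's argument. The paper compares $\Ind[\sigma(j)\neq\hat\sigma(j)]\le\Ind[\sigma(j)\neq\hat\pi(j)]$ job by job, splitting on whether $\hat\pi(j)\in\Gamma(j)$, and that comparison is exactly your inclusion $M(\sigma,\hat\sigma)\subseteq M(\sigma,\hat\pi)$ proved via the contrapositive.
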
 

\begin{proof}
Fix a feasible schedule $\sigma$ and a job $j$.

If $\hat\pi(j)\in\Gamma(j)$, then $\hat\sigma(j)=\hat\pi(j)$, and thus
$
\Ind[\sigma(j)\neq \hat\sigma(j)] = \Ind[\sigma(j)\neq \hat\pi(j)],
$
so job $j$ contributes the same amount to $\move(\sigma,\hat\sigma)$ and $\move(\sigma,\hat\pi)$.

If $\hat\pi(j)\notin\Gamma(j)$, then feasibility implies $\sigma(j)\in\Gamma(j)$ and hence $\sigma(j)\neq \hat\pi(j)$, i.e.,
$\Ind[\sigma(j)\neq \hat\pi(j)]=1$, while $\Ind[\sigma(j)\neq\hat\sigma(j)]\in\{0,1\}$.
Thus $\Ind[\sigma(j)\neq\hat\sigma(j)]\le \Ind[\sigma(j)\neq \hat\pi(j)]$ for every job $j$.
Multiplying by $p_j$ and summing over jobs yields $\move(\sigma,\hat\sigma)\le \move(\sigma,\hat\pi)$. 
\end{proof}

We are now ready to show that the projected predicted schedule $\hat\sigma=\Proj(\hat\pi)$  has makespan close to $\OPT$ when the prediction error with respect to an optimal solution is small.

\begin{theorem}[Smoothness of $\Proj$]\label{thm:smooth-move}
Let $\hat\sigma=\Proj(\hat\pi)$. Then,
$\Cmax(\hat\sigma)\le \OPT + \Err$.
\end{theorem}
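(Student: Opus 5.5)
The plan is a direct load-decomposition argument that compares $\hat\sigma$ with an optimal schedule attaining $\Err$, combined with Lemma~\ref{lem:proj}. The greedy rule inside $\Proj$ (least-loaded eligible machine) should play no role in the bound. The only property of $\Proj$ I need is that it outputs a feasible schedule whose disagreement with $\sigma^*$ is controlled by Lemma~\ref{lem:proj}.

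First, fix a feasible schedule $\sigma^*$ with $\Cmax(\sigma^*)=\OPT$ and $\move(\sigma^*,\hat\pi)=\Err$; it exists by Definition~\ref{def:error}, since the minimum is over a finite nonempty set. Applying Lemma~\ref{lem:proj} with $\sigma=\sigma^*$ gives $\move(\sigma^*,\hat\sigma)\le \Err$.

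Next, fix an arbitrary machine $i$ and split the jobs assigned to it by $\hat\sigma$ according to whether $\sigma^*$ agrees:
\[
L_i(\hat\sigma)=\sum_{j:\,\hat\sigma(j)=i=\sigma^*(j)} p_j \;+\; \sum_{j:\,\hat\sigma(j)=i\neq\sigma^*(j)} p_j .
\]
The first sum ranges over a subset of the jobs that $\sigma^*$ places on $i$. Since $p_j>0$, it is at most $L_i(\sigma^*)\le\OPT$. The second sum ranges over a subset of $M(\sigma^*,\hat\sigma)$, so it is at most $\move(\sigma^*,\hat\sigma)\le\Err$. Hence $L_i(\hat\sigma)\le \OPT+\Err$ for every $i$. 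Taking the maximum over $i$ gives $\Cmax(\hat\sigma)\le\OPT+\Err$.

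I do not expect a real obstacle. The one point needing care is the comparison schedule: it must be the optimal $\sigma^*$ that realizes the minimum in Definition~\ref{def:error}, not an arbitrary optimal schedule, so that the moved-load bound is exactly $\Err$. It is also worth noting that the bound on each machine is crude, since it charges the entire moved volume to a single machine. That slack is presumably what the later guessing-and-LP refinement improves to $\delta\Err$.
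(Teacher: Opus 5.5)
Your proposal is correct and follows essentially the same argument as the paper: it splits each machine's load under $\hat\sigma$ into the part shared with $\sigma^*$ (at most $L_i(\sigma^*)\le\OPT$) and the disagreeing part (at most $\move(\sigma^*,\hat\sigma)$), then invokes Lemma~\ref{lem:proj}. The only cosmetic difference is that the paper starts from an arbitrary optimal $\sigma^*$ and minimizes over optimal schedules at the end, while you fix the minimizer up front; the two are equivalent.
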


\begin{proof}
Let $\sigma^*$ be an optimal schedule.
Consider any machine $i$.

Let $\hat L_i = L_i(\hat\sigma)$ and
$X_i = \{j : \hat{\sigma}(j) = i,\, \sigma^*(j) \neq i\}$ be the jobs assigned to machine~$i$ under~$\hat{\sigma}$ but not under~$\sigma^*$. Since every job on machine $i$ under $\hat{\sigma}$ either also appears on $i$ under $\sigma^*$ or belongs to $X_i$, we have $\hat{L}_i \leq L_i(\sigma^*) + \sum_{j \in X_i} p_j$, and thus $\hat{L}_i - L_i(\sigma^*) \leq \sum_{j \in X_i} p_j$.

Then,
$\hat L_i - L_i(\sigma^*) \le \sum_{j\in X_i} p_j \le \move(\sigma^*,\hat\sigma)$.
Since $L_i(\sigma^*)\le \OPT$, we obtain $\hat L_i\le \OPT + \move(\sigma^*,\hat\sigma)$ for every machine $i$, and thus
$\Cmax(\hat\sigma)\le \OPT + \move(\sigma^*,\hat\sigma)$.

By Lemma~\ref{lem:proj}, $\move(\sigma^*,\hat\sigma)\le \move(\sigma^*,\hat\pi)$.
Minimizing $\move(\sigma^*,\hat\pi)$ over optimal schedules yields
$\Cmax(\hat\sigma)\le \OPT + \Err$. 
\end{proof}

We now prove our smoothness guarantee. The idea is to guess a
constant number of large jobs on which the projected prediction differs from an optimal schedule closest to the prediction in moved-load, and then use the standard assignment LP and rounding for \textsc{Restricted Assignment} on the remaining, small jobs.

\begin{theorem}[Smoothness]\label{thm:smooth-improved}
For every fixed $\delta\in(0,1]$, there is an algorithm running
in time $(nm)^{\bigO(1/\delta)}$ that outputs a feasible schedule $\sigma$
with $\Cmax(\sigma)\le \OPT+\delta\Err$.
\end{theorem}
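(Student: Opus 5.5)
The plan is to guess the few large jobs on which the projected prediction $\hat\sigma$ disagrees with a suitable optimal schedule. Every other large job is frozen at its predicted machine, and only the small jobs are assigned, via the Lenstra--Shmoys--Tardos assignment LP and rounding, whose additive loss is at most the largest fractionally assigned processing time. We output the best feasible schedule found over all guesses, so incorrect guesses do no harm.

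\textbf{Setup.} Fix an optimal schedule $\sigma^*$ attaining $\Err$, i.e.\ $\move(\sigma^*,\hat\pi)=\Err$. Let $M=M(\sigma^*,\hat\sigma)$. By Lemma~\ref{lem:proj}, $\sum_{j\in M}p_j=\move(\sigma^*,\hat\sigma)\le\Err$. Call a job \emph{big} if $p_j>\delta\Err$ and \emph{small} otherwise. Then $B=\{j\in M: p_j>\delta\Err\}$ satisfies $|B|<1/\delta$, since the loads in $M$ sum to at most $\Err$.

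\textbf{Guesses.} The algorithm does not know $\Err$, so it guesses three things.
\begin{itemize}
\item A threshold $\tau\in\{0\}\cup\{p_j: j\in[n]\}$. The intended value is the largest $p_j\le\delta\Err$, or $0$ if there is none. With this choice a job is big exactly when $p_j>\tau$.
\item The set $B$, with $|B|\le\lceil 1/\delta\rceil$, together with the machines $\sigma^*(j)\in\Gamma(j)$ for $j\in B$. This gives $(nm)^{\bigO(1/\delta)}$ possibilities.
\item A target makespan $T$. We can simply take the smallest integer $T$ for which the LP below is feasible, found by binary search over $[0,\sum_j p_j]$.
\end{itemize}

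\textbf{The LP for one guess.} Fix each job $j\in B$ at its guessed machine. Fix each job $j\notin B$ with $p_j>\tau$ at $\hat\sigma(j)$; this is feasible because $\hat\sigma$ is feasible. Let $F_i$ be the resulting fixed load on machine $i$. For the small jobs $S=\{j: p_j\le\tau\}$, write the assignment LP
\[
\sum_{i\in\Gamma(j)}x_{ij}=1 \ (j\in S),\qquad \sum_{j\in S}p_jx_{ij}\le T-F_i \ (i\in[m]),\qquad x\ge 0,
\]
with $x_{ij}=0$ for $i\notin\Gamma(j)$.

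For the correct guess, $\sigma^*$ itself agrees with all the fixed assignments. Every big job outside $B$ lies outside $M$, so $\sigma^*(j)=\hat\sigma(j)$ for it. Hence the restriction of $\sigma^*$ to $S$ is an integral feasible LP solution with $T=\OPT$, and the smallest feasible $T$ satisfies $T\le\OPT$.

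\textbf{Rounding.} Apply the LST rounding to a vertex solution: the fractional jobs form a pseudoforest and are matched to distinct machines. The resulting integral schedule of $S$ adds at most one job of size $\le\tau$ to each machine beyond its capacity $T-F_i$. The final makespan is therefore at most $T+\tau\le\OPT+\delta\Err$.

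\textbf{Running time and main obstacle.} Each guess produces a feasible schedule, and the number of guesses times the polynomial LP cost is $(nm)^{\bigO(1/\delta)}$. The step I expect to need the most care is guaranteeing that the frozen jobs cost nothing. The concern is that LST rounding loses the maximum size of any job it is allowed to move, so no unmoved big job may enter the LP. This is resolved by guessing the threshold $\tau$ explicitly, which avoids needing $\Err$, and by choosing $\sigma^*$ to attain $\Err$, so that Lemma~\ref{lem:proj} bounds the total size of the mismatched jobs and hence the number of big ones.
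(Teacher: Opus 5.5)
Your proposal is correct and follows essentially the same route as the paper. Both proofs guess a size threshold, the few large jobs on which $\hat\sigma$ and an $\Err$-attaining optimum disagree (together with their optimal machines), freeze all other large jobs at $\hat\sigma$, and apply the additive LST rounding to the remaining small jobs. The only differences are cosmetic: the paper takes the threshold to be the $\lceil 1/\delta\rceil$-th largest mismatched job size, which bounds it by $\Err/(q+1)$, and it uses the LP optimum directly rather than a binary search over $T$.
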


\begin{proof}
Fix $\delta\in(0,1]$ and set
$q=\left\lceil\frac{1}{\delta}\right\rceil-1$.

Compute $\hat\sigma=\Proj(\hat\pi)$. The algorithm enumerates a threshold
$\theta\in\{0\}\cup\{p_j:j\in[n]\}$, a set
$B\subseteq\{j:p_j>\theta\}$ with $|B|\le q$, and an assignment
$\varphi:B\to[m]$ with $\varphi(j)\in\Gamma(j)$ for all $j\in B$.
For a fixed guess, jobs in $B$ are fixed according to $\varphi$, jobs
$j\notin B$ with $p_j>\theta$ are fixed to $\hat\sigma(j)$, and all jobs of
size at most $\theta$ are left free.

For the free jobs, we solve the standard assignment LP with the fixed loads
already assigned. If the LP is feasible, let $T'$ be its optimum value and apply
the Lenstra--Shmoys--Tardos (LST) rounding~\cite{LST90}. By the additive form of
the LST rounding, the resulting schedule has makespan at most $T'+\theta$,
because every free job has size at most $\theta$. If the LP is infeasible, we
discard this guess. The algorithm returns the best feasible schedule over all guesses.

Let $\sigma^*$ be an optimal schedule minimizing $\move(\sigma^*,\hat\pi)$, so
$\move(\sigma^*,\hat\pi)=\Err$. By Lemma~\ref{lem:proj},
$\move(\sigma^*,\hat\sigma)\le \Err$.

Let $R=M(\sigma^*,\hat\sigma)$ be the set of jobs on which $\sigma^*$ differs
from $\hat\sigma$.

If $|R|\le q$, then the algorithm considers the guess $\theta=0$, $B=R$, and
$\varphi(j)=\sigma^*(j)$ for all $j\in B$. Since there are no free jobs when $\theta=0$, this guess reconstructs
$\sigma^*$. Hence the algorithm returns a schedule of makespan at most
$\OPT\le \OPT+\delta\Err$.

Assume now that $|R|>q$. Order the jobs of $R$ as $j_1,\ldots,j_{|R|}$ so that
$p_{j_1}\ge p_{j_2}\ge\dots \ge p_{j_{|R|}}$, and set
$\theta=p_{j_{q+1}}$. Then
\[
    (q+1)\theta
    \le
    \sum_{\ell=1}^{q+1}p_{j_\ell}
    \le
    \move(\sigma^*,\hat\sigma)
    \le
    \Err,
\]
and therefore $\theta\le \Err/(q+1)$.

Consider the guess with $B=\{j\in R:p_j>\theta\}$ and
$\varphi(j)=\sigma^*(j)$ for all $j\in B$. By the choice of $\theta$, we have
$|B|\le q$, so this guess is enumerated. Moreover, every job of size larger
than $\theta$ outside $B$ is not in $R$, and hence is assigned in the same way
by $\sigma^*$ and $\hat\sigma$. Thus all fixed jobs are consistent with
$\sigma^*$, while all remaining jobs have size at most $\theta$ and are left
free. Therefore, $\sigma^*$ is feasible for the LP of this guess, so
$T'\le \OPT$. The rounded schedule for this guess has makespan at most
\[
    T'+\theta
    \le
    \OPT+\frac{\Err}{q+1}
    \le
    \OPT+\delta\Err,
\]
where the last inequality follows from $1/(q+1)\le\delta$.

The number of guesses is at most $(n+1)n^qm^q$, and each guess requires one LP
solve and one rounding step. Thus the running time is
$(nm)^{\bigO(q)}=(nm)^{\bigO(1/\delta)}$.
\end{proof}

\subsection{Robust Approximation}

If the smoothness guarantee is not sufficient, we can combine our algorithm with any $\alpha$-approximation algorithm for the problem that does not use predictions.

\begin{theorem}[Smoothness \& Robustness] \label{theo:rob}
Let $\mathcal{A}_{\alpha}$ be any $\alpha$-approximation algorithm for \textsc{Restricted Assignment} that does not use predictions. 
For every fixed $\delta\in(0,1]$, given a predicted assignment $\hat\pi$, one can output a schedule $\sigma$ in time $(nm)^{\bigO(1/\delta)}$ such that
$\Cmax(\sigma)\le \min\{\OPT+\delta\Err,\alpha \cdot \OPT\}$.
\end{theorem}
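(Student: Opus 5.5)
The plan is a "best of both" argument. We run the algorithm of Theorem~\ref{thm:smooth-improved} with parameter $\delta$ on the prediction $\hat\pi$, which yields a feasible schedule $\sigma_1$. Separately, we run $\mathcal{A}_\alpha$ on the instance without using the prediction, which yields a feasible schedule $\sigma_2$. We output whichever of $\sigma_1,\sigma_2$ has smaller makespan. No knowledge of $\OPT$ or $\Err$ is needed for this choice, since $\Cmax(\sigma_1)$ and $\Cmax(\sigma_2)$ can be computed directly in $\bigO(n+m)$ time.

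For correctness, note first that both schedules are feasible, so the output $\sigma$ is feasible. Theorem~\ref{thm:smooth-improved} gives $\Cmax(\sigma_1)\le \OPT+\delta\Err$, and the definition of an $\alpha$-approximation gives $\Cmax(\sigma_2)\le \alpha\cdot\OPT$. Therefore
\[
\Cmax(\sigma)=\min\{\Cmax(\sigma_1),\Cmax(\sigma_2)\}\le \min\{\OPT+\delta\Err,\ \alpha\cdot\OPT\}.
\]

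For the running time, the first call takes $(nm)^{\bigO(1/\delta)}$ time by Theorem~\ref{thm:smooth-improved}. We assume, as is implicit in the statement, that $\mathcal{A}_\alpha$ runs in polynomial time, for example the LST $2$-approximation. Its cost $(nm)^{\bigO(1)}$ is then absorbed into $(nm)^{\bigO(1/\delta)}$ because $\delta\le 1$. Comparing the two makespans is linear.

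There is no real obstacle here; the only subtle point is this running-time assumption on $\mathcal{A}_\alpha$. If $\mathcal{A}_\alpha$ were allowed to take superpolynomial time, the bound would become $(nm)^{\bigO(1/\delta)}$ plus the running time of $\mathcal{A}_\alpha$. I would state this assumption explicitly.
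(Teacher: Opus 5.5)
Your proposal is correct and is essentially the paper's proof: run the algorithm of Theorem~\ref{thm:smooth-improved} with parameter $\delta$ and the algorithm $\mathcal{A}_\alpha$, then return whichever schedule has the smaller makespan. Your explicit remark that the running-time bound needs $\mathcal{A}_\alpha$ to run in polynomial time is a fair point that the paper leaves implicit.
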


\begin{proof}
Run the algorithm of Theorem~\ref{thm:smooth-improved} with parameter
$\delta$, and let $\sigma^\delta$ be the resulting schedule. Also run
$\mathcal{A}_\alpha$ and let $\sigma^\alpha$ be its output. Return the schedule
with smaller makespan among $\sigma^\delta$ and $\sigma^\alpha$.
By Theorem~\ref{thm:smooth-improved}, $\Cmax(\sigma^\delta)\le
\OPT+\delta\Err$. Since $\mathcal{A}_\alpha$ is an $\alpha$-approximation
algorithm, $\Cmax(\sigma^\alpha)\le \alpha \cdot \OPT$. Returning the better of the two
schedules gives $\Cmax(\sigma)\le \min\{\OPT+\delta\Err,\alpha \cdot \OPT\}$.
\end{proof}

\section{Error-Dependent Repair} \label{sec:repair-load}

In this section, we show how to repair the projected feasible schedule $\hat\sigma$ in time that depends on the prediction error. Assume we are given an estimate $T$ for the optimal makespan of \textsc{Restricted Assignment} satisfying
$\OPT \le T < \alpha \cdot \OPT$, where $\alpha$ is the best known approximation ratio for the problem without predictions.
When the makespan estimate $T$ improves over the robust guarantee, we present an $\mathrm{XP}$ (slice-wise polynomial) repair procedure that starts from the projected schedule $\hat\sigma$ and
outputs a schedule of makespan at most $T$ in error-dependent time. Finally, we complement this result with a parameterized hardness result for exact repair with a given target makespan, showing that an $\mathrm{FPT}$ algorithm parameterized only by the prediction error would imply $\mathrm{W[1]}=\mathrm{FPT}$ (Section~\ref{subsec:repair-hardness}).

\subsection{Lower bounds on prediction error}
We next derive some lower bounds on the prediction error, which will be useful in the design of our learning-augmented repair algorithm. Recall that $\Err$ and $\Err(T)$ are defined with respect to the original prediction $\hat{\pi}$. In this section, we repair the projected feasible schedule $\hat{\sigma}=\Proj(\hat{\pi})$.
For every machine $i$, let $\hat L_i=L_i(\hat\sigma)$ denote its load under the projected predicted schedule.
Whenever we compare a feasible schedule $\sigma$ to $\hat{\sigma}$, we use Lemma~\ref{lem:proj}, which gives $\move(\sigma,\hat{\sigma}) \le \move(\sigma,\hat{\pi})$.

First, we show that achieving makespan at most $T$ requires no more correction than reaching an optimal schedule. More formally, we have the following.

\begin{lemma} \label{lem:lb-err(T)}
If $T\ge \OPT$, then $\Err(T)\le \Err$.
\end{lemma}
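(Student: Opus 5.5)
The plan is to compare the two minimizations directly. I will show that every schedule admissible in the definition of $\Err$ is also admissible in the definition of $\Err(T)$. Both quantities minimize the same objective, $\move(\cdot,\hat\pi)$. Enlarging the feasible set of a minimization can only decrease the minimum, so the inequality follows.

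\emph{Key step.} Let $\sigma^*$ be a feasible schedule with $\Cmax(\sigma^*)=\OPT$ that attains the minimum in Definition~\ref{def:error}, so that $\move(\sigma^*,\hat\pi)=\Err$. Such a schedule exists because there are only finitely many assignments $[n]\to[m]$ and at least one optimal schedule exists, since the instance is feasible. The assumption $T\ge\OPT$ gives $\Cmax(\sigma^*)=\OPT\le T$. Hence $\sigma^*$ lies in the set $\{\sigma \text{ feasible}: \Cmax(\sigma)\le T\}$ over which $\Err(T)$ is minimized. It follows that
\[
\Err(T)\le \move(\sigma^*,\hat\pi)=\Err .
\]

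\emph{Remarks.} The argument does not need Lemma~\ref{lem:proj} or the projected schedule $\hat\sigma$, because both error notions are defined with respect to the original prediction $\hat\pi$. The only point needing care is that both minima are attained, and this holds by the finiteness of the set of schedules. I do not expect any real obstacle.
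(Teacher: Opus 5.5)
Your proof is correct and follows essentially the paper's argument: an optimal schedule $\sigma^*$ is admissible for the minimization defining $\Err(T)$ because $\OPT\le T$, so $\Err(T)\le\move(\sigma^*,\hat\pi)$. The only cosmetic difference is that you fix a minimizing optimal schedule up front, justified by finiteness, whereas the paper takes an arbitrary optimal schedule and minimizes over all optimal schedules at the end.
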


\begin{proof}
Let $\sigma^*$ be an optimal schedule, so $\Cmax(\sigma^*)=\OPT\le T$. Therefore, $\sigma^*$ is  feasible for the minimization defining $\Err(T)$. Hence, $\Err(T) \le \move(\sigma^*,\hat\pi)$.
Minimizing the right-hand side over all optimal schedules gives $\Err(T)\le \Err$. 
\end{proof}

We define the total overload under $\hat\sigma$ with respect to makespan $T$ as
$\Delta(T) = \sum_{i\in[m]} (\hat L_i - T)^+$,
where $(\hat L_i - T)^+ = \max\{\hat L_i - T, 0\}$ for every $i \in [m]$.
Let us now prove the following proposition.

\begin{proposition}\label{prop:lb-chain}
For every $T$, $\Err(T)\ge \lceil\Delta(T)\rceil$.
\end{proposition}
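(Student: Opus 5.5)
The plan is to fix any feasible schedule $\sigma$ with $\Cmax(\sigma)\le T$ that attains the minimum in the definition of $\Err(T)$, and to show $\move(\sigma,\hat\pi)\ge \Delta(T)$. Integrality then upgrades this to $\lceil\Delta(T)\rceil$. If no such $\sigma$ exists, then $\Err(T)=+\infty$ under the usual convention and there is nothing to prove. Because $\Delta(T)$ is defined through the loads of the projected schedule $\hat\sigma$, I will work with $\hat\sigma$ throughout. At the end I will transfer the bound back to $\hat\pi$ via Lemma~\ref{lem:proj}, which gives $\move(\sigma,\hat\sigma)\le\move(\sigma,\hat\pi)$.

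The core step is a per-machine accounting, mirroring the argument in the proof of Theorem~\ref{thm:smooth-move} but in the opposite direction. For each machine $i$, let $Y_i=\{j:\hat\sigma(j)=i,\ \sigma(j)\neq i\}$ be the jobs that $\sigma$ moves away from $i$. Every job on $i$ under $\hat\sigma$ either remains on $i$ under $\sigma$ or lies in $Y_i$. Hence
\[
\hat L_i \le L_i(\sigma)+\sum_{j\in Y_i}p_j \le T+\sum_{j\in Y_i}p_j ,
\]
and since the sum is nonnegative, $\sum_{j\in Y_i}p_j\ge (\hat L_i-T)^+$. Each job $j\in M(\sigma,\hat\sigma)$ belongs to exactly one set $Y_i$, namely $i=\hat\sigma(j)$, so the sets $Y_i$ are disjoint and their union is $M(\sigma,\hat\sigma)$. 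Summing over $i$ therefore gives
\[
\move(\sigma,\hat\sigma)=\sum_i\sum_{j\in Y_i}p_j\ge \Delta(T).
\]

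Finally, chaining with Lemma~\ref{lem:proj} gives $\Err(T)=\move(\sigma,\hat\pi)\ge\move(\sigma,\hat\sigma)\ge\Delta(T)$. All processing times $p_j$ are integers, so $\Err(T)$ is an integer, and therefore $\Err(T)\ge\lceil\Delta(T)\rceil$. There is no real obstacle here. The one point needing care is the disjointness of the $Y_i$, so that no moved job is counted twice; this holds because each job has a single source machine under $\hat\sigma$.
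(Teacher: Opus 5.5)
Your proof is correct and follows the same route as the paper: you lower-bound the load leaving each machine $i$ by $(\hat L_i - T)^+$, sum over machines to get $\move(\sigma,\hat\sigma)\ge\Delta(T)$, and finish with Lemma~\ref{lem:proj} and integrality. The disjointness check for the sets $Y_i$ and the $\Err(T)=+\infty$ case only make explicit details that the paper leaves implicit.
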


\begin{proof}
Let $\sigma$ be feasible with $\Cmax(\sigma)\le T$. For each machine $i$, the jobs
that $\hat\sigma$ places on $i$ but $\sigma$ does not, must have total processing time
at least $(\hat L_i - T)^+$. Summing over machines gives $\move(\sigma, \hat\sigma) \geq \Delta(T)$.
By Lemma~\ref{lem:proj}, $\move(\sigma, \hat\pi) \geq \move(\sigma, \hat\sigma) \geq \Delta(T)$. Minimizing over $\sigma$ yields
$\Err(T)\ge\Delta(T)$, and integrality of processing times gives
$\Err(T)\ge\lceil\Delta(T)\rceil$. 
\end{proof}

\subsection{Repair Oracle} \label{subsec:repair}
In this subsection, we describe a repair procedure at a given makespan threshold $T$ and moved-load budget $K$. The procedure relies on two ingredients:
(i) a dynamic program that repairs $\hat\sigma$ when the set of incident machines is fixed, and (ii) a global enumeration over all candidate incident sets of bounded size.
Let us first present some lemmas that will prove useful. 

First, we define the incident set
$A_{\mathrm{inc}}(\sigma,\hat\sigma)
= \{\hat\sigma(j): j\in M(\sigma,\hat\sigma)\} \cup \{\sigma(j): j\in M(\sigma,\hat\sigma)\}$.

\begin{lemma}[Overloaded machines must be incident]\label{lem:over-incident}
If $\sigma$ is feasible and $\Cmax(\sigma)\le T$, then every overloaded machine wrt $T$ is incident:
$\{i:\hat L_i>T\} \subseteq A_{\mathrm{inc}}(\sigma,\hat\sigma)$.
\end{lemma}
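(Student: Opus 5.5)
The plan is a short contrapositive argument: fix a feasible $\sigma$ with $\Cmax(\sigma)\le T$ and a machine $i$ with $\hat L_i>T$, and show that some job moved between $\hat\sigma$ and $\sigma$ has $i$ as an endpoint. Since $A_{\mathrm{inc}}(\sigma,\hat\sigma)$ contains $\hat\sigma(j)$ for every $j\in M(\sigma,\hat\sigma)$, it suffices to find a single job $j$ with $\hat\sigma(j)=i$ and $\sigma(j)\neq i$.

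The key step is a load comparison on machine $i$. Suppose, for contradiction, that no job $j$ with $\hat\sigma(j)=i$ lies in $M(\sigma,\hat\sigma)$. Then every job that $\hat\sigma$ places on $i$ is also placed on $i$ by $\sigma$, so $\{j:\hat\sigma(j)=i\}\subseteq\{j:\sigma(j)=i\}$. Because all processing times are positive, this gives $L_i(\sigma)\ge \hat L_i>T$. That contradicts $\Cmax(\sigma)\le T$.

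Hence there is a job $j$ with $\hat\sigma(j)=i\neq\sigma(j)$. This job lies in $M(\sigma,\hat\sigma)$, so $i=\hat\sigma(j)\in A_{\mathrm{inc}}(\sigma,\hat\sigma)$. This is the same reasoning as in Proposition~\ref{prop:lb-chain}, specialized to one machine: the jobs that leave machine $i$ must have total size at least $\hat L_i-T>0$, so the set of such jobs is nonempty.

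I do not expect a real obstacle. The only points to check are that feasibility of $\sigma$ plays no role beyond making $\Cmax(\sigma)$ well defined, and that the inclusion of job sets really yields the load inequality. The latter holds because $p_j\ge 1$.
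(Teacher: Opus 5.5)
Your proof is correct and is essentially the paper's argument: the paper also argues by contradiction that if $i\notin A_{\mathrm{inc}}(\sigma,\hat\sigma)$, then $L_i(\sigma)=\hat L_i>T$. The only difference is that you use just the absence of outgoing moved jobs, which gives $L_i(\sigma)\ge \hat L_i$ and yields the slightly stronger conclusion that $i=\hat\sigma(j)$ for some moved job $j$.
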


\begin{proof}
Fix $i$ with $\hat L_i>T$. If $i\notin A_{\mathrm{inc}}(\sigma,\hat\sigma)$ then no predicted job leaves $i$ and no job is
moved to $i$, hence $L_i(\sigma)=\hat L_i>T$, contradicting $\Cmax(\sigma)\le T$. 
\end{proof}

Next, we bound the number of machines touched by moved jobs in terms of the moved-load budget.

\begin{lemma}[Incident set size bound]\label{lem:inc-size}
Let $K\in\Z_{\ge 0}$.
If $\move(\sigma,\hat\sigma)\le K$, then
$|A_{\mathrm{inc}}(\sigma,\hat\sigma)|\le 2K$.
\end{lemma}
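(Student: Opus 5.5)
The plan is a direct counting argument that uses the integrality of processing times. By definition, $A_{\mathrm{inc}}(\sigma,\hat\sigma)$ is the union over moved jobs $j\in M(\sigma,\hat\sigma)$ of the two-element (or fewer) sets $\{\hat\sigma(j),\sigma(j)\}$. Each moved job therefore contributes at most two machines to the incident set, so we will first bound
\[
|A_{\mathrm{inc}}(\sigma,\hat\sigma)| \le \sum_{j\in M(\sigma,\hat\sigma)} |\{\hat\sigma(j),\sigma(j)\}| \le 2\,|M(\sigma,\hat\sigma)|
\]
by the union bound.

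Next, we convert the number of moved jobs into moved-load. Section~\ref{sec:prelim} assumes $p_j\in\Z_{\ge 1}$ for every job, so
\[
|M(\sigma,\hat\sigma)| \le \sum_{j\in M(\sigma,\hat\sigma)} p_j = \move(\sigma,\hat\sigma)\le K.
\]
Combining the two displays gives $|A_{\mathrm{inc}}(\sigma,\hat\sigma)|\le 2K$. The case $K=0$ is consistent with this: then $M(\sigma,\hat\sigma)=\emptyset$ and the incident set is empty.

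None of the steps is difficult. The only point to check is that the bound genuinely needs $p_j\ge 1$, and that assumption was stated explicitly in the preliminaries after scaling rational processing times to integers. No appeal to Lemma~\ref{lem:proj} is needed, since the statement is phrased directly in terms of $\move(\sigma,\hat\sigma)$.
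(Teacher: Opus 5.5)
Your proof is correct and is essentially the paper's argument: $p_j\ge 1$ gives $|M(\sigma,\hat\sigma)|\le \move(\sigma,\hat\sigma)\le K$. Each moved job then adds at most two machines (its origin and destination) to $A_{\mathrm{inc}}(\sigma,\hat\sigma)$, which gives the bound $2K$.
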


\begin{proof}
Since processing times are integers and $p_j\ge 1$ for all jobs, each moved job contributes at least $1$ to moved-load, so
$|M(\sigma,\hat\sigma)|\le \move(\sigma,\hat\sigma)\le K$.
Each moved job contributes at most two machines (origin and destination) to $A_{\mathrm{inc}}$, so
$|A_{\mathrm{inc}}(\sigma,\hat\sigma)|\le 2|M(\sigma,\hat\sigma)|\le 2K$. 
\end{proof}

Let the set of machines overloaded under $\hat\sigma$ wrt $T$ be
$O(T) = \{i\in[m]: \hat L_i > T\}$.

We are now ready to describe a Repair oracle for a fixed incident set of machines using dynamic programming. The DP follows the standard pattern of multi-dimensional knapsack: we process items one by one, and for each item we choose one of a few possible destinations. Each choice updates a small vector that summarizes the partial reassignment (e.g., signed load changes across the incident set), and we pay a cost only when the choice differs from the predicted assignment. The DP therefore searches, within a bounded state space, for a minimum-cost partial reassignment that satisfies all feasibility constraints.

\begin{lemma}[Fixed-set Repair oracle (moved-load)]
\label{lem:fixed-dp-load}
Fix a threshold makespan $T$, a moved-load budget $K\in\mathbb{Z}_{\ge 1}$, and a feasible predicted schedule $\hat\sigma$.
Let $A\subseteq[m]$ satisfy $O(T)\subseteq A$. Write $A=\{i_1,\dots,i_a\}$ with $a=|A|\ge 1$ and define $J_A = \{j\in[n]: \hat\sigma(j)\in A\}$.
There exists a dynamic program (DP) running in time
$O\!\left(|J_A|\cdot a \cdot (2K+1)^{a-1}\right)$
that decides whether there exists a feasible schedule $\sigma$ such that:
\begin{enumerate}
\item $\sigma(j)=\hat\sigma(j)$ for all $j\notin J_A$;
\item $\sigma(j)\in \Gamma(j)\cap A$ for all $j\in J_A$;
\item $\move(\sigma,\hat\sigma)\le K$;
\item $\Cmax(\sigma)\le T$.
\end{enumerate}
If such a schedule exists, the DP outputs one. Otherwise it outputs \textsc{Fail}.
\end{lemma}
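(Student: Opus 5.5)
Since jobs outside $J_A$ are fixed by condition~1 and every job in $J_A$ must land in $A$ by condition~2, the machines outside $A$ keep load exactly $\hat L_i\le T$ (because $O(T)\subseteq A$). The task therefore reduces to reassigning the jobs of $J_A$ among the machines of $A$. I plan a knapsack-style DP over $J_A=\{j_1,\dots,j_N\}$, processed in a fixed order. The state after processing $j_1,\dots,j_t$ records the signed load-change vector $d=(d_{i_1},\dots,d_{i_a})$, where $d_i=L_i(\sigma)-\hat L_i$ restricted to the jobs considered so far. The DP value is the minimum moved-load reaching $d$, and a back-pointer is kept for reconstruction. For job $j$ we branch over destinations $i\in\Gamma(j)\cap A$. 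Keeping $\hat\sigma(j)$ costs nothing and leaves $d$ unchanged. Choosing $i\neq\hat\sigma(j)$ costs $p_j$, sets $d_i\mathrel{+}=p_j$ and $d_{\hat\sigma(j)}\mathrel{-}=p_j$, and any transition whose accumulated cost exceeds $K$ is discarded.

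The state space must be bounded. Any partial reassignment with moved-load at most $K$ satisfies $|d_i|\le K$ for every $i$: the increase on a machine is at most the load moved in, and the decrease at most the load moved out, both at most $K$. In addition $\sum_{i\in A} d_i=0$, because jobs move only within $A$. Hence $d$ is determined by its first $a-1$ coordinates, each taking one of the $2K+1$ values in $[-K,K]$, so there are at most $(2K+1)^{a-1}$ states per layer. Each job has at most $a$ transitions, and if $d$ is stored as $a-1$ coordinates with $d_{i_a}$ implied, each transition updates $O(1)$ coordinates in $O(1)$ time. This gives total time $O(|J_A|\cdot a\cdot(2K+1)^{a-1})$. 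I expect the main care to be needed here, in the accounting: storing per state the minimum cost, rather than the cost as an extra dimension, avoids an extra factor of $K$ while remaining correct, since only the cost is being minimized.

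After the last layer, the DP accepts any state $d$ with value at most $K$ and $\hat L_i+d_i\le T$ for all $i\in A$. It then reconstructs $\sigma$ via the back-pointers, extended by $\hat\sigma$ outside $J_A$.

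Correctness follows from a bijection. Every schedule satisfying conditions~1--4 traces a path of admissible transitions whose cost equals $\move(\sigma,\hat\sigma)\le K$, and its final $d$ satisfies the load check, so the minimum at that state is at most $K$ and the DP accepts. Conversely, any accepted path yields a schedule that is feasible (destinations lie in $\Gamma(j)$), has moved-load at most $K$, and has every machine load at most $T$, inside $A$ by the final check and outside $A$ because loads there are unchanged. If no final state passes the check, the DP outputs \textsc{Fail}.
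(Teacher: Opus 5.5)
Your proposal is correct and follows the paper's proof essentially step by step. It uses the same reduction to loads on $A$ via $O(T)\subseteq A$, the same knapsack-style DP over $J_A$ with signed net-change vectors bounded in $[-K,K]$ and stored in $a-1$ coordinates (the last one implied by $\sum_r b_r=0$), the same cost and acceptance rules, and the same path-to-schedule correspondence for correctness.
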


\begin{proof}
Since $O(T)\subseteq A$, every machine $i\notin A$ is not overloaded in the predicted schedule, i.e., $\hat L_i\le T$.
Under conditions~(1)--(2), the assignment on machines outside $A$ is unchanged, hence
$L_i(\sigma)=\hat L_i\le T$ for all $i\notin A$.
Therefore, among schedules satisfying~(1)--(2), the constraint $\Cmax(\sigma)\le T$ is equivalent to requiring $ L_i(\sigma)\le T$, for all $i\in A$.

Write $A=\{i_1,\dots,i_a\}$ and recall $J_A=\{j:\hat\sigma(j)\in A\}$.
For any schedule $\sigma$ satisfying~(1)--(2), define the net load change on each machine $i_r\in A$ by
\[
b_r = L_{i_r}(\sigma)-\hat L_{i_r}\qquad \text{for all } r\in[a].
\]
Because only jobs in $J_A$ are reassigned and they remain within $A$, the total load inside $A$ is preserved, and thus $\sum_{r=1}^a b_r = 0$.

Moreover, if $\move(\sigma,\hat\sigma)\le K$, then for each $r$ let $\Out_r$ be the total processing time of jobs that leave $i_r$
and $\In_r$ the total processing time of jobs that enter $i_r$.
Both $\Out_r$ and $\In_r$ are bounded by the total moved-load, so $\Out_r\le K$ and $\In_r\le K$, and thus
\[
|b_r|=|\In_r-\Out_r|\le K \qquad \text{for all } r\in[a].
\]
Consequently, it suffices to search over integer vectors $(b_1,\dots,b_a)$ with $\sum_r b_r=0$ and $b_r\in[-K,K]$, since $p_j \in \mathbb{Z}_{\ge 1}$ for all jobs.

We store $x=(b_1,\dots,b_{a-1})$, with $b_a(x)=-\sum_{r=1}^{a-1}b_r$.

The state space is
$\mathcal{X}=\bigl\{x\in[-K,K]^{a-1}\cap\mathbb{Z}^{a-1} : b_a(x)\in[-K,K]\bigr\}$.

Order $J_A=\{j_1,\dots,j_q\}$, where $q=|J_A|$.
For $\ell\in\{0,\dots,q\}$ and $x\in\mathcal{X}$, we let $\DP[\ell,x]$ be the minimum moved-load incurred by assigning the first $\ell$
jobs to machines in $\Gamma(\cdot)\cap A$ while inducing net changes $x$ on machines $i_1,\dots,i_{a-1}$:
\begin{multline*}
\DP[\ell,x] \;=\;
\min \Bigl\{\sum_{t\le \ell:\,\varphi(j_t)\neq \hat\sigma(j_t)} p_{j_t} :
\varphi(j_t)\in\Gamma(j_t)\cap A,\ \text{and induced net change is } x\Bigr\}.
\end{multline*}
Initialize $\DP[0,\vec 0]=0$ and $\DP[0,x]=+\infty$ for $x\neq \vec 0$.

The net load change on machine $i_r$ is $+p_j$ if the job is assigned to $i_r$ but was not predicted there, $-p_j$ if the job was predicted on $i_r$ but is reassigned elsewhere, and $0$ if the job remains on its predicted machine $i_r$. Effects involving $i_a$ are handled implicitly through $b_a(x)=-\sum_{r\le a-1}x_r$. More formally, we have the following transitions.

Fix $\ell<q$, a state $x\in\mathcal{X}$ with $\DP[\ell,x]<\infty$, and let $j=j_{\ell+1}$.
For each choice $u\in\Gamma(j)\cap A$, define the new state $x'$ by updating each coordinate $r\in[a-1]$ as
\[
x'_r \;=\; x_r \;+\; p_j\cdot\bigl(\mathbf{1}[u=i_r]-\mathbf{1}[\hat\sigma(j)=i_r]\bigr).
\]

The transition cost is
\[
\cost(j,u)=\begin{cases}
0 & \text{if } u=\hat\sigma(j),\\
p_j & \text{otherwise.}
\end{cases}
\]
We keep the transition only if $x'\in\mathcal{X}$ and $\DP[\ell,x]+\cost(j,u)\le K$, and then update
\[
\DP[\ell+1,x'] \leftarrow \min\{\DP[\ell+1,x'],\ \DP[\ell,x]+\cost(j,u)\}.
\]

We accept a final state $(q,x)$ if the net-change vector $(b_1,\dots,b_a)$ satisfies the load constraints:
\[
\hat L_{i_r}+b_r \le T \qquad \text{for all } r\in[a],
\]
where $(b_1,\dots,b_{a-1})=x$ and $b_a=b_a(x)$.
If such a final state exists, we reconstruct the corresponding schedule $\sigma$. Otherwise, we output \textsc{Fail}.

\paragraph*{Correctness.}
Any schedule $\sigma$ satisfying~(1)--(2) corresponds to a unique DP path: at step $t$ choose $u=\sigma(j_t)$.
The state update matches exactly the induced net changes on $i_1,\dots,i_{a-1}$, and the accumulated cost equals the moved-load of
the processed jobs, hence the final DP value equals $\move(\sigma,\hat\sigma)$.
Conversely, any DP path defines an assignment of all jobs in $J_A$ into $\Gamma(\cdot)\cap A$. Together with~(1) it yields a full schedule.
Therefore, the DP reaches an accepted state with value $\le K$ if and only if there exists a schedule satisfying~(1)--(4).

\paragraph*{Running time.}
There are $q+1=|J_A|+1$ possible values of $\ell$, and for each value at most $|\mathcal{X}|\le (2K+1)^{a-1}$ states.
From each state, we try at most $|\Gamma(j_{\ell+1})\cap A|\le a$ transitions, each in $\bigO(1)$ time. Hence the running time is
$O\!\left(|J_A|\cdot a \cdot (2K+1)^{a-1}\right)$. 
\end{proof}

We now wrap Lemma~\ref{lem:fixed-dp-load} into a global oracle by enumerating all candidate sets $A$ that could contain the incident machines of a solution of moved-load at most $K$.
If $|O(T)|>2K$, then no such set exists and the oracle immediately returns \textsc{Fail}.

\begin{definition}[Global Repair oracle with $(T,K)$ (moved-load)]
Given makespan $T$, budget $K\in\mathbb{Z}_{\ge 1}$ and a feasible predicted schedule $\hat\sigma$, the procedure
$\RepairOracle(T,K,\hat\sigma)$ enumerates all sets $A\subseteq[m]$ such that $O(T)\subseteq A$ and
$|A|\le 2K$, runs the DP of Lemma~\ref{lem:fixed-dp-load} for each such $A$, and returns the first schedule found (or \textsc{Fail} if none exists).
\end{definition}

\begin{lemma}[Correctness of the Global Repair oracle (moved-load)]
\label{lem:oracle-correct}
If there exists a feasible schedule $\sigma'$ such that $\Cmax(\sigma')\le T$ and
$\move(\sigma',\hat\sigma)\le K$, then $\RepairOracle(T,K,\hat\sigma)$ returns a feasible schedule $\sigma$ with $\Cmax(\sigma)\le T$ and $\move(\sigma,\hat\sigma)\le K$.
\end{lemma}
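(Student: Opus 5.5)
The plan is to exhibit, among the sets $A$ that $\RepairOracle(T,K,\hat\sigma)$ enumerates, one specific set on which the fixed-set DP of Lemma~\ref{lem:fixed-dp-load} is guaranteed to succeed. Then we use that any schedule the DP returns already satisfies the required properties.

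Let $\sigma'$ be the witness schedule, feasible with $\Cmax(\sigma')\le T$ and $\move(\sigma',\hat\sigma)\le K$. Set $A^\ast = A_{\mathrm{inc}}(\sigma',\hat\sigma)\cup O(T)$. By Lemma~\ref{lem:over-incident}, $O(T)\subseteq A_{\mathrm{inc}}(\sigma',\hat\sigma)$, so in fact $A^\ast = A_{\mathrm{inc}}(\sigma',\hat\sigma)$. By Lemma~\ref{lem:inc-size}, $|A^\ast|\le 2K$. There is one degenerate case: $A^\ast$ may be empty, which happens when $\sigma'=\hat\sigma$ and $O(T)=\emptyset$. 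In that case we pad $A^\ast$ with an arbitrary machine so that $a\ge 1$, and $|A^\ast|\le 2K$ still holds because $K\ge 1$. Hence $A^\ast$ meets both enumeration conditions, $O(T)\subseteq A^\ast$ and $|A^\ast|\le 2K$, and the oracle eventually reaches $A^\ast$ unless it has already returned something earlier.

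Next we check that $\sigma'$ satisfies conditions (1)--(4) of Lemma~\ref{lem:fixed-dp-load} for $A=A^\ast$.
\begin{itemize}
\item Condition (1): if $j\notin J_{A^\ast}$, then $\hat\sigma(j)\notin A^\ast$. Every moved job has its original machine $\hat\sigma(j)$ in the incident set, so $j$ is not moved, i.e.\ $\sigma'(j)=\hat\sigma(j)$.
\item Condition (2): take $j\in J_{A^\ast}$. If $j$ is moved, then $\sigma'(j)\in A_{\mathrm{inc}}\subseteq A^\ast$. If $j$ is not moved, then $\sigma'(j)=\hat\sigma(j)\in A^\ast$. In both cases $\sigma'(j)\in\Gamma(j)$ by feasibility, so $\sigma'(j)\in\Gamma(j)\cap A^\ast$.
\item Conditions (3) and (4) are exactly the hypotheses on $\sigma'$.
\end{itemize}
Therefore the DP on $A^\ast$ does not output \textsc{Fail}, so the oracle returns some schedule, either here or on an earlier set.

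It remains to argue that whatever schedule the oracle returns has the desired properties. Whichever set $A$ produced the output, Lemma~\ref{lem:fixed-dp-load} guarantees that the output $\sigma$ is feasible and satisfies (1)--(4) for that $A$. In particular $\Cmax(\sigma)\le T$ and $\move(\sigma,\hat\sigma)\le K$, which completes the argument.

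The only step I expect to need care is checking (1)--(2) for $A^\ast$, together with the empty-set edge case. Everything else follows directly from the earlier lemmas.
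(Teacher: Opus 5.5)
Your proposal is correct and follows the paper's proof: you take $A = A_{\mathrm{inc}}(\sigma',\hat\sigma)$, which contains $O(T)$ by Lemma~\ref{lem:over-incident} and has size at most $2K$ by Lemma~\ref{lem:inc-size}, so it is enumerated and the fixed-set DP succeeds on it. The paper leaves three of your details implicit: the explicit check of conditions (1)--(2), the padding for an empty incident set (which only arises when $O(T)=\emptyset$), and the observation that a schedule returned for an earlier set $A$ also satisfies (1)--(4).
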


\begin{proof}
By Lemma~\ref{lem:over-incident}, we have
$O(T)\subseteq A_{\mathrm{inc}}(\sigma',\hat\sigma)$.
By Lemma~\ref{lem:inc-size}, $|A_{\mathrm{inc}}(\sigma',\hat\sigma)|\le 2K$.
Hence the oracle enumerates $A=A_{\mathrm{inc}}(\sigma',\hat\sigma)$, and for this set $A$, the schedule $\sigma'$ satisfies the restrictions of Lemma~\ref{lem:fixed-dp-load}.
Therefore, the DP succeeds and returns a schedule with the required guarantees. 
\end{proof}

We next bound the running time of $\RepairOracle(T,K,\hat\sigma)$.
The overhead comes from enumerating all candidate incident sets $A$ with $O(T)\subseteq A$ and $|A|\le 2K$, and running the fixed-set DP (Lemma~\ref{lem:fixed-dp-load}) for each such set.

\begin{lemma}[Running time of the Global Repair oracle]
Fix $T$, $K\in\mathbb{Z}_{\ge 1}$ and a feasible predicted schedule $\hat\sigma$.
Then, $\RepairOracle(T,K,\hat\sigma)$ runs in time
$O\!\left(n\cdot m^{2K}\cdot (2K+1)^{2K+1}\right)$.
\end{lemma}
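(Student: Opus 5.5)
The bound splits into two factors: the number of candidate sets $A$ the oracle enumerates, and the cost of the dynamic program of Lemma~\ref{lem:fixed-dp-load} on a single set. We multiply these, after first disposing of the trivial cases. If $|O(T)|>2K$, the oracle returns \textsc{Fail} after computing the loads $\hat L_i$. That takes $O(n+m)$ time, which is within the claimed bound because we may assume $m\le n$ (empty machines can be discarded) or simply absorb the cost. Otherwise we enumerate sets $A$ with $O(T)\subseteq A\subseteq[m]$ and $|A|\le 2K$.

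\emph{Counting candidate sets.} Every such $A$ is determined by a subset of $[m]\setminus O(T)$ of size at most $2K-|O(T)|\le 2K$. The number of subsets of $[m]$ of size at most $2K$ is
\[
\sum_{s=0}^{2K}\binom{m}{s}\le \sum_{s=0}^{2K} m^s \le (2K+1)\,m^{2K}.
\]
(For $m\ge 2$ one can even drop the factor $2K+1$ up to a constant, but keeping it is harmless.) The sets can be listed with constant overhead per set, or with $O(K)$ overhead per set, which is again absorbed.

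\emph{Cost per set.} For a fixed $A$ with $a=|A|\le 2K$, Lemma~\ref{lem:fixed-dp-load} gives running time $O(|J_A|\cdot a\cdot(2K+1)^{a-1})$. We bound $|J_A|\le n$, $a\le 2K\le 2K+1$ and $(2K+1)^{a-1}\le (2K+1)^{2K-1}$, so each DP call costs $O(n\,(2K+1)^{2K})$. Building $J_A$ and the restricted eligibility lists $\Gamma(j)\cap A$ costs $O(n\cdot a)$ with a membership array for $A$, which is dominated by this.

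\emph{Combining.} Multiplying the two factors gives
\[
O\bigl((2K+1)m^{2K}\cdot n(2K+1)^{2K}\bigr)=O\bigl(n\cdot m^{2K}(2K+1)^{2K+1}\bigr).
\]
The only delicate point is to make sure the extra factor of $a$ from the DP and the $2K+1$ from the subset count do not together exceed the stated exponent $2K+1$. They do not, because the DP's exponent is $a-1\le 2K-1$: this frees one factor of $2K+1$ for $a$ and one for the subset count.
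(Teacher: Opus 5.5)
Your proof is correct and follows the paper's argument: bound the number of candidate sets by $\sum_{t=0}^{2K}\binom{m}{t}\le (2K+1)m^{2K}$, bound each DP call by $O(n\cdot 2K\cdot(2K+1)^{2K-1})\le O(n(2K+1)^{2K})$, and multiply. One small correction to your handling of the trivial \textsc{Fail} case: empty machines cannot be discarded, since they may receive moved jobs, but your fallback already suffices because $n+m\le 2nm^{2K}$.
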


\begin{proof}
The number of candidate sets $A$ with $O(T)\subseteq A$ and $|A|\le 2K$ is at most $\sum_{t=0}^{2K}\binom{m}{t}\le (2K+1)m^{2K}$. For each, the DP of Lemma~\ref{lem:fixed-dp-load} runs in time $\bigO(n\cdot 2K\cdot (2K+1)^{2K-1})$, since $|J_A|\le n$ and $|A|\le 2K$. Multiplying gives the bound. 
\end{proof}

\subsection{Global Repair}
In this subsection, we describe the Global Repair procedure (see Algorithm~\ref{alg:global-repair-load}) that does not require knowing the optimal
moved-load budget ($\Err(T)$) in advance. The algorithm guesses the necessary moved-load by doubling a budget parameter $K$,
starting from the lower bound $\lceil\Delta(T)\rceil$, and invokes a bounded-budget oracle to obtain a schedule of makespan at most~$T$.
The resulting running time depends smoothly on the prediction error $\Err$.

\begin{algorithm}[htbp]
\caption{\textsc{GlobalRepair}$(T,\hat\pi)$}
\label{alg:global-repair-load}
\begin{algorithmic}[1]
\Require Threshold makespan $T\ge \OPT$, possibly infeasible prediction $\hat\pi$
\Ensure A feasible schedule $\sigma$ with $\Cmax(\sigma)\le T$
\State $\hat\sigma \gets \Proj(\hat\pi)$
\State Compute loads $\hat L_i$ for all $i\in[m]$, and $\Delta(T)\gets \sum_{i\in[m]} (\hat L_i - T)^+$
\If{$\Delta(T)=0$}
 \State \Return $\hat\sigma$
\EndIf
\State $K \gets \lceil \Delta(T)\rceil$
\While{\textbf{true}}
  \State $\sigma \gets \RepairOracle(T,K,\hat\sigma)$
  \If{$\sigma \neq \textsc{Fail}$}
    \State \Return $\sigma$
  \EndIf
  \State $K \gets 2K$
\EndWhile
\end{algorithmic}
\end{algorithm}

\begin{theorem}[Global Repair at makespan $T$]
\label{thm:parametric-repair}
Fix any makespan threshold $T\ge \OPT$ and let $\hat\sigma=\Proj(\hat\pi)$.
Algorithm~\ref{alg:global-repair-load} returns a feasible schedule $\sigma$ with $\Cmax(\sigma)\le T$.
Moreover, if $\Delta(T)>0$ it stops after reaching some integer $K$ satisfying
$K < 2\Err(T) \le 2\Err$, and the number of oracle calls is at most
$\left\lceil \log_2\!\left(\frac{\Err(T)}{\lceil \Delta(T)\rceil}\right)\right\rceil + 1\ \le\ \left\lceil \log_2\!\left(\frac{\Err}{\lceil \Delta(T)\rceil}\right)\right\rceil + 1$.

If $\Delta(T)=0$, it directly outputs the predicted schedule $\hat\sigma$.
\end{theorem}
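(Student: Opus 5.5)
The argument splits on whether $\Delta(T)$ vanishes; in the positive case it combines the lower bound $\Err(T)\ge\lceil\Delta(T)\rceil$ with the correctness of $\RepairOracle$. If $\Delta(T)=0$, then $\hat L_i\le T$ for every machine $i$. Hence $\hat\sigma=\Proj(\hat\pi)$, which is feasible by construction of $\Proj$, already satisfies $\Cmax(\hat\sigma)\le T$ and is returned on line 4.

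Assume now $\Delta(T)>0$, and write $E=\Err(T)$. Since $T\ge\OPT$, a feasible schedule of makespan at most $T$ exists, so $E$ is a finite nonnegative integer. By Lemma~\ref{lem:lb-err(T)}, $E\le\Err$. By Proposition~\ref{prop:lb-chain}, $E\ge\lceil\Delta(T)\rceil\ge 1$. Fix a feasible $\sigma^\dagger$ with $\Cmax(\sigma^\dagger)\le T$ and $\move(\sigma^\dagger,\hat\pi)=E$. Lemma~\ref{lem:proj} then gives $\move(\sigma^\dagger,\hat\sigma)\le E$. Consequently, whenever the current budget satisfies $K\ge E$, Lemma~\ref{lem:oracle-correct} applied to $\sigma'=\sigma^\dagger$ shows that $\RepairOracle(T,K,\hat\sigma)$ does not fail. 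It then returns a feasible schedule of makespan at most $T$. Every schedule the algorithm ever returns comes from the oracle, whose output by Lemma~\ref{lem:fixed-dp-load} is always feasible with makespan at most $T$, so correctness holds as soon as we know the loop terminates.

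For termination and the count, the budgets are $K_t=2^{t}\lceil\Delta(T)\rceil$ for $t=0,1,\dots$. The loop stops no later than the first $t$ with $K_t\ge E$; call this index $t^*$. If $t^*=0$, then $K_0=\lceil\Delta(T)\rceil\le E<2E$. If $t^*\ge 1$, minimality gives $K_{t^*-1}<E$, hence $K_{t^*}=2K_{t^*-1}<2E$. Either way the stopping budget $K$ is an integer with $K<2E\le 2\Err$; if the loop happens to stop earlier, its budget is even smaller. The number of oracle calls is at most $t^*+1$, and $t^*$ is the least $t$ with $2^t\ge E/\lceil\Delta(T)\rceil$, namely $t^*=\lceil\log_2(E/\lceil\Delta(T)\rceil)\rceil$, which is nonnegative since $E\ge\lceil\Delta(T)\rceil$. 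Monotonicity of $\log$ together with $E\le\Err$ gives the second inequality.

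The main subtlety is the gap between errors measured against $\hat\pi$ and against $\hat\sigma$. $\Err(T)$ is defined relative to $\hat\pi$, while the oracle certifies budgets relative to $\hat\sigma$. Lemma~\ref{lem:proj} bridges this in the needed direction: a schedule witnessing $E$ for $\hat\pi$ has moved-load at most $E$ relative to $\hat\sigma$. Note that the oracle may succeed at some $K<E$, because moves measured against $\hat\sigma$ can be cheaper. That is harmless, since we only need upper bounds on the stopping budget and on the number of iterations.
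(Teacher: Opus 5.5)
Your proof is correct and follows the paper's argument essentially step for step: the lower bound $\Err(T)\ge\lceil\Delta(T)\rceil$ from Proposition~\ref{prop:lb-chain}, the transfer of a witness from $\hat\pi$ to $\hat\sigma$ via Lemma~\ref{lem:proj}, success of the oracle at the first doubled budget $K^*\ge\Err(T)$ via Lemma~\ref{lem:oracle-correct}, and $\Err(T)\le\Err$ from Lemma~\ref{lem:lb-err(T)}. Two points the paper leaves implicit are made explicit in your write-up: the case where the very first budget $\lceil\Delta(T)\rceil$ already suffices, and the fact that an earlier oracle success still yields a valid schedule by Lemma~\ref{lem:fixed-dp-load}.
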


\begin{proof}
If $\Delta(T)=0$, then $\hat L_i\le T$ for all $i$. Hence $\Cmax(\hat\sigma)\le T$ and the algorithm returns $\hat\sigma$.

Assume $\Delta(T)>0$. By Proposition~\ref{prop:lb-chain}, we have $\Err(T)\ge \lceil \Delta(T)\rceil$.
Thus, the doubling sequence
$K=\lceil\Delta(T)\rceil,\ 2\lceil\Delta(T)\rceil,\ 4\lceil\Delta(T)\rceil,\ \dots$
reaches a first value $K^*$ with $K^*\ge \Err(T)$ after at most
$\left\lceil \log_2(\Err(T)/\lceil\Delta(T)\rceil)\right\rceil+1$ iterations, and by minimality $K^*<2\Err(T)$.
We show that the oracle succeeds when called with budget $K^*$. Hence the algorithm stops no later than this call.

For the budget $K^*$, by definition of $\Err(T)$ there exists a feasible schedule $\sigma'$ with
\[
\Cmax(\sigma') \le T
\quad\text{and}\quad
\move(\sigma',\hat{\pi}) = \Err(T) \le K^*.
\]
By Lemma~\ref{lem:proj}, we have
$\move(\sigma',\hat{\sigma}) \le \move(\sigma',\hat{\pi}) \le K^*$.
Therefore the hypothesis of Lemma~\ref{lem:oracle-correct} is satisfied, so the oracle succeeds when called with budget $K^*$ and returns a feasible schedule $\sigma$ with $\Cmax(\sigma) \le T$.

Since the algorithm stops no later than the call with budget $K^*$, the budget $K$ at which it stops satisfies $K\le K^*<2\Err(T)$. Finally, since $T\ge \OPT$, Lemma~\ref{lem:lb-err(T)} yields $\Err(T)\le \Err$, and hence $K<2\Err(T)\le 2\Err$.
The second bound on the number of oracle calls follows similarly.
\end{proof}

Consequently, since the successful oracle call uses $K < 2\Err$ we get the following corollary.

\begin{corollary}[Running time of \textsc{GlobalRepair}] \label{cor:time-load}
Assume that $T\ge \OPT$ and $\Delta(T)>0$. Then Algorithm~\ref{alg:global-repair-load} returns a feasible schedule $\sigma$ with
$\Cmax(\sigma)\le T$ in time
$\widetilde{\bigO}\!\left(
n \cdot m^{\bigO(\Err)} \cdot \Err^{\bigO(\Err)}
\right)$.
\end{corollary}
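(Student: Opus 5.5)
Correctness is already given by Theorem~\ref{thm:parametric-repair}, so the corollary only needs a running-time bound: sum the cost of the oracle calls plus the preprocessing. Computing $\hat\sigma=\Proj(\hat\pi)$, the loads $\hat L_i$, $\Delta(T)$ and $O(T)$ costs $\bigO(nm)$ time, or $\bigO(n\log m+m)$ with a priority queue per eligibility query. This is absorbed in the final bound, since $\Delta(T)>0$ implies $\Err\ge 1$ by Proposition~\ref{prop:lb-chain} and Lemma~\ref{lem:lb-err(T)}, so $m\le m^{\bigO(\Err)}$.

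Next, bound each oracle call. By the running-time lemma for the Global Repair oracle, a call with budget $K$ costs $\bigO(n\cdot m^{2K}\cdot(2K+1)^{2K+1})$, which is monotone increasing in $K$. By Theorem~\ref{thm:parametric-repair}, every budget used satisfies $K\le K^*<2\Err(T)\le 2\Err$, because the budgets increase and the algorithm stops by the call with $K^*$. So each call costs at most $\bigO(n\cdot m^{4\Err}\cdot(4\Err+1)^{4\Err+1})=n\cdot m^{\bigO(\Err)}\cdot\Err^{\bigO(\Err)}$, using $(4\Err+1)^{4\Err+1}\le (5\Err)^{5\Err}=\Err^{\bigO(\Err)}\cdot 2^{\bigO(\Err)}$ for $\Err\ge1$. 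The factor $2^{\bigO(\Err)}$ is absorbed into $\Err^{\bigO(\Err)}$ when $\Err\ge 2$, and into $m^{\bigO(\Err)}$ when $m\ge 2$. The degenerate case $\Err=1$ and $m=1$ costs a constant.

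Finally, multiply by the number of calls, at most $\lceil\log_2(\Err/\lceil\Delta(T)\rceil)\rceil+1\le \log_2\Err+2$. This logarithmic factor is what $\widetilde{\bigO}$ hides. One could instead note that the per-call costs grow geometrically, so the total is dominated by the last call, but the log factor is harmless either way.

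The only slightly delicate points are two. First, the budgets must be bounded in terms of $\Err$, not $\Err(T)$ or $K$; Lemma~\ref{lem:lb-err(T)} handles this. Second, the constants $2^{\bigO(\Err)}$ must be absorbed cleanly. Neither is a real obstacle; the main care is stating explicitly that $\Err\ge1$ whenever $\Delta(T)>0$.
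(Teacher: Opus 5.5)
Your proposal is correct and follows the paper's argument, which the paper compresses into a single line. That argument plugs the bound $K<2\Err(T)\le 2\Err$ from Theorem~\ref{thm:parametric-repair} into the oracle running-time lemma and absorbs the $\bigO(\log \Err)$ oracle calls into $\widetilde{\bigO}$; your handling of $\Err\ge 1$ and of the $2^{\bigO(\Err)}$ constants just makes this explicit. One small point: your parenthetical $\bigO(n\log m+m)$ bound for the projection is not justified for arbitrary eligibility sets, since finding the least-loaded machine in $\Gamma(j)$ generally requires scanning $\Gamma(j)$. You do not need it, because the $\bigO(nm)$ bound is already absorbed.
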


In contrast, a naive approach that explicitly guesses the set of moved jobs would lead to a running time with an $n^{\Theta(\Err)}$-type dependence.

\subsection{Hardness for Exact Moved-Load Repair}
\label{subsec:repair-hardness}

We now complement the moved-load repair algorithm with a parameterized hardness result. In particular, exact repair with a given target makespan does not admit an $\mathrm{FPT}$ algorithm parameterized by the amount of moved-load, unless $\mathrm{W[1]}=\mathrm{FPT}$. The proof is by a reduction from \textsc{Multicolored Clique} and is deferred to Appendix~\ref{app:repair-hardness}.

\begin{definition}[Exact Moved-Load Repair]
Given an instance of \textsc{Restricted Assignment}, a feasible predicted schedule $\hat\sigma$, a makespan threshold $T$, and an integer budget $K$, decide whether there exists a feasible schedule $\sigma$ such that $\Cmax(\sigma)\le T$ and $\move(\sigma,\hat\sigma)\le K$.
\end{definition}

\begin{theorem}
\label{thm:exact-moved-load-repair-hardness}
\textsc{Exact Moved-Load Repair} is $\mathrm{W[1]}$-hard parameterized by $K$, even when $\hat\sigma$ is feasible and the instance has at most three distinct processing times.
\end{theorem}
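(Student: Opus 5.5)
We reduce from \textsc{Multicolored Clique}, parameterized by $k$: given a graph $G=(V,E)$ with $V=V_1\cup\dots\cup V_k$, decide whether there is a clique containing one vertex of each color class. The target is an instance of \textsc{Exact Moved-Load Repair} with budget $K=f(k)$ (we aim for $K=\bigO(k^2)$), a feasible $\hat\sigma$, and at most three distinct processing times. The repair problem is exactly ``reassign few jobs so that all overloaded machines drop to $T$.'' So the plan is to make overload force selections, and to use the tight budget so that the selections must be mutually consistent.

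\textbf{Gadgets.}
\begin{itemize}
\item \emph{Vertex selection.} For each color $c$ there is a machine $s_c$ overloaded by exactly $1$. It holds a selector job of size $1$ that is eligible on $s_c$ and on one vertex machine $v$ for each $v\in V_c$.
\item \emph{Vertex machines.} Each vertex machine $v$ is loaded to exactly $T$ under $\hat\sigma$. Receiving the selector therefore overloads it by $1$. This forces $v$ to push out ``vertex-token'' jobs toward the edge-selection part.
\item \emph{Edge selection.} For each pair $c<d$ there is a machine $e_{cd}$ overloaded by a fixed amount. It must push out one edge job of larger size (e.g.\ size $2$), chosen among jobs corresponding to edges $uv$ with $u\in V_c$, $v\in V_d$.
\item \emph{Consistency.} The edge job for $uv$ may only go to machines that become free exactly when $u$ and $v$ are the selected vertices.
\end{itemize}

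\textbf{Consistency via three sizes.} We want to avoid an extra factor from which endpoint is involved. One way: each vertex machine $v\in V_c$ holds $k-1$ unit tokens, one for each other color $d$. Each token is eligible only on $v$ and on ``port'' machines $p_{v,d}$, which are loaded to $T-1$ under $\hat\sigma$. Then the edge job for $uv$ is made eligible on a machine that can absorb it only if both $p_{u,d}$ and $p_{v,c}$ were relieved.

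Making this precise with only three processing times is delicate. An alternative standard trick uses processing times $1$, $k$, and a large $L$, with the large jobs pinned. Overload of $k-1$ on each selected vertex machine forces all $k-1$ tokens out. Tokens can then only be placed on edge machines that have exactly the right slack, and that slack is created only by removing the edge job.

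\textbf{Budget accounting.} The budget will be set to
\[
K = k\cdot 1 + \binom{k}{2}\cdot 2 + k(k-1)\cdot 1 .
\]
These terms count, respectively, the selectors, the edge jobs, and the tokens. The load bookkeeping then shows that every overloaded machine is fixed using exactly this much movement.

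\textbf{Correctness.}
\begin{itemize}
\item \emph{Forward direction.} From a multicolored clique we exhibit the moves explicitly and check that all loads are at most $T$. This is routine.
\item \emph{Backward direction.} Given a repair of cost at most $K$, we use a lower-bound argument in the spirit of Proposition~\ref{prop:lb-chain}. Total overload plus the forced cascades must consume the whole budget. Hence each $s_c$ sends its selector to exactly one vertex, and each $e_{cd}$ releases exactly one edge job. Tightness rules out any slack for cheating moves, such as splitting cascades. The placement constraints then force the chosen edges to be incident to the chosen vertices, which yields a clique.
\end{itemize}

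\textbf{Main obstacle.} The hard part is this backward tightness argument. The restricted-assignment freedom must not allow cheaper alternative repairs: for instance, a token absorbed by an unintended machine with slack, or an edge job of one size replaced by several small moves. I would handle this by giving every non-gadget machine load exactly $T$, so it has zero slack. I would also choose eligibility sets so that every job has at most two or three eligible machines, with sizes picked so that only the intended combinations sum exactly. The parameter $K=\bigO(k^2)$ and the polynomial size of the construction then give $\mathrm{W[1]}$-hardness.
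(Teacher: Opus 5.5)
Your proposal has a genuine gap: it never commits to a construction, and the mechanisms it does specify do not enforce consistency.

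\textbf{Three concrete failures.}
\begin{itemize}
\item \emph{Selector size.} With a selector of size $1$, a selected vertex machine is overloaded by only $1$. The repair is then forced to move just one of its $k-1$ tokens, not all of them. Your budget $K=k+2\binom{k}{2}+k(k-1)$ is computed for this size-$1$ variant. The ``alternative'' with overload $k-1$ and sizes $1,k,L$ is never reconciled with it.
\item \emph{The AND gadget.} More seriously, the port gadget cannot implement the AND of the two endpoint selections. A token placed on a port $p_{v,d}$ at load $T-1$ changes only the load of $p_{v,d}$. It creates no capacity on any other machine. So ``$p_{u,d}$ and $p_{v,c}$ being relieved'' cannot make some third machine able to absorb the size-$2$ edge job. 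In restricted assignment a job sits on a single machine. The only way to tie an edge to both endpoints is a single machine whose required load change can be realized only by two jobs, one tied to each endpoint.
\item \emph{Backward direction.} This is only asserted (``tightness rules out cheating''). Slack machines such as your ports are exactly what make cheating repairs possible. Zero slack on non-gadget machines is not enough to force the cascades.
\end{itemize}

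\textbf{What the paper's proof rests on.}
\begin{itemize}
\item \emph{Global load conservation.} The total overload equals the total slack: the $P_{ij}$ are at $T+2$, one sink is at $T-k(k-1)$, and everything else is at $T$. Hence every schedule of makespan at most $T$ has every machine at load exactly $T$. Every cascade is then forced by equalities, and the backward direction does not even use the budget $K$; the budget only needs to equal the forward cost $3k(k-1)$.
\item \emph{The AND on the edge machine.} Once $M_e$ receives the size-$2$ pair-edge job, it must shed exactly $2$. Its only movable jobs are two unit endpoint jobs, each eligible only on its own endpoint's vertex machine.
\item \emph{Blockers.} Each vertex machine carries a blocker of size $k-1$ whose only other eligible machine is the sink. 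So a vertex machine accepts either $0$ or exactly $k-1$ endpoint jobs. Since exactly $k-1$ endpoint jobs reach each color class, exactly one vertex per class is selected.
\end{itemize}

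\textbf{Salvaging your direction.} Your vertex-to-edge direction can work, as the time-reversal of this gadget. Use per-color sources overloaded by exactly $k-1$, holding selectors of size $k-1$. Give each vertex machine one unit token per incident edge, eligible only on that edge's machine. Send edge jobs of size $2$ to per-pair machines with slack exactly $2$, so that total overload equals total slack. This only works with these ingredients in place; as written, your proposal does not establish the reduction.
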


\section{Applications} \label{sec:applications}
In this section, we instantiate the guarantees of Sections~\ref{sec:approx} and~\ref{sec:repair-load} using the best known approximation and estimation bounds for \textsc{Restricted Assignment} and its two-valued special case. In the following, we assume that all processing times are integers.

\subsection{Restricted Assignment}

For \textsc{Restricted Assignment}, the classical LST algorithm has an approximation ratio of $2$~\cite{LST90}. More precisely, this was later improved to $2-\frac{1}{m}$~\cite{better-2}, where $m$ denotes the number of machines, but the worst-case ratio remains $2$ asymptotically as $m$ grows. On the estimation side, Jansen and Rohwedder gave a polynomial-time $\frac{11}{6}$-estimation algorithm~\cite{lars_2}.

Therefore, using the $2$-approximation algorithm, Theorem~\ref{theo:rob} yields
the following guarantee. For every fixed $\delta\in(0,1]$, one can compute a
schedule $\sigma^{\mathrm{rob}}$ in time $(nm)^{\bigO(1/\delta)}$ such that
\[\frac{\Cmax(\sigma^{\mathrm{rob}})}{\OPT}
\le \min\left\{1+\delta\frac{\Err}{\OPT},\,2\right\}.\]

Next, let $T$ be the makespan estimate returned by the estimation algorithm of~\cite{lars_2}. Then, $\OPT \le T \le \frac{11}{6}\OPT$.
If $\Cmax(\sigma^{\mathrm{rob}}) > T$, we can apply the Global Repair algorithm of Section~\ref{sec:repair-load} and get a feasible repaired schedule $\sigma^\mathrm{rep}$ with $\Cmax(\sigma^\mathrm{rep}) \le \frac{11}{6}\OPT
$
in running time
$\widetilde{\bigO}\!\left(
n \cdot m^{\bigO(\Err)}\cdot \Err^{\bigO(\Err)}
\right)$.
Thus, whenever the estimate $T$ improves over the robust guarantee above, the
repair procedure recovers a schedule matching the best known polynomial-time
estimation ratio in error-dependent time.

\subsection{Two-valued Restricted Assignment}

We now consider the two-valued special case in which each job has size either $1$ or $p_B > 1$. For this problem, Jansen et al. gave a polynomial-time estimation algorithm with ratio $\frac{5}{3} + \zeta_1$, for every fixed $\zeta_1 > 0$~\cite{Maack}. On the approximation side, Bamas et al.~\cite{bamas} designed a polynomial-time $(1.75 +\zeta_2)$-approximation algorithm, for every fixed $\zeta_2 > 0$.

Fix $\zeta_1,\zeta_2>0$. Then, by combining the smoothness algorithm
with the $(1.75+\zeta_2)$-approximation algorithm and returning the schedule
with smaller makespan, Theorem~\ref{theo:rob} gives the following guarantee.
For every fixed $\delta\in(0,1]$, one can compute a schedule
$\sigma^{\mathrm{rob}}$ in time $(nm)^{\bigO(1/\delta)}$ with
\[
\frac{\Cmax(\sigma^{\mathrm{rob}})}{\OPT}
\le
\min\left\{1+\delta\frac{\Err}{\OPT},\,1.75+\zeta_2\right\}.\]

Now let $T$ be the makespan estimate provided by the estimation algorithm of~\cite{Maack}, so that $\OPT \le T \le \left(\frac{5}{3}+\zeta_1 \right)\OPT$.
Whenever the resulting schedule satisfies $\Cmax(\sigma^{\mathrm{rob}})>T$, we
may use the Global Repair algorithm from Section~\ref{sec:repair-load} to obtain
a feasible repaired schedule $\sigma^{\mathrm{rep}}$ with
$\Cmax(\sigma^{\mathrm{rep}})\le \left(\frac{5}{3}+\zeta_1 \right)\OPT$ in
running time
$\widetilde{\bigO}\!\left(
n \cdot m^{\bigO(\Err)}\cdot \Err^{\bigO(\Err)}
\right)$.

\section{Future Work}
A natural direction is to extend our framework to the \textsc{Restricted Assignment Santa Claus} problem, where each item can be assigned only to a subset of players and the objective is to maximize the minimum total value received by any player. We believe the structural machinery developed in this work can be adapted to the max-min setting. Since the \textsc{Restricted Assignment Santa Claus} problem also exhibits a separation between estimation and approximation, with the best known polynomial-time estimation ratio being $3.534+\zeta_1$~\cite{haxell}, for every $\zeta_1 > 0$, and the best known approximation ratio being $4+\zeta_2$~\cite{davies}, for every $\zeta_2 >0$, the repair approach may similarly be used to recover solutions of quality beyond what is achievable in polynomial time without predictions, when the prediction error is sufficiently small. We believe that this extension is a natural next step for the framework developed here.

\bibliography{bibliography}

\newpage
\appendix

\section{Repair Parameterized by the Number of Moved Jobs} \label{sec:repair_jobs}

In this section, we present a repair strategy whose running time depends on the number of moved jobs
rather than the moved-load. This can be significantly smaller when the predictor makes only a few mistakes
on large jobs: moving a single large job may contribute a large amount to moved-load, but it contributes only
one unit to the moved-job error. We first give such a repair strategy for \textsc{Restricted Assignment} and then
in Section~\ref{sec:2ra}, we refine it further for the two-valued version of the problem. Let us now formally define the moved-job error.

\begin{definition}[Moved-job error]
    For a feasible schedule $\sigma$ and any mapping $\tau:[n]\to[m]$, define:
$\moveJobs(\sigma,\tau)=|M(\sigma,\tau)| = | \{j\in[n]:\sigma(j)\neq \tau(j)\} |$.
\end{definition}

Next, we define the moved-job error of a predicted assignment $\hat\pi$ with respect to $\OPT$ and to a fixed makespan $T$.

\begin{definition}[Prediction moved-job error wrt $\OPT$]
Define the prediction error with respect to an optimal solution by
\[
 \ErrJ = \ErrJ(\OPT) = \min\{\moveJobs(\sigma^*,\hat\pi): \sigma^* \text{ feasible and } \Cmax(\sigma^*)=\OPT\}.
\]
\end{definition}

\begin{definition}[Prediction moved-job error wrt $T$]
Fix $T$. The minimum number of moved jobs needed to reach makespan at most $T$, relative to a predicted assignment $\hat\pi$, is
\[
\ErrJ(T) =\min\{\moveJobs(\sigma,\hat\pi): \sigma \text{ feasible and } \Cmax(\sigma)\le T\}.
\]
\end{definition}

Since $p_j \in \mathbb{Z}_{\ge 1}$ for all jobs, for every feasible schedule $\sigma$ we have that
\[
    \moveJobs(\sigma,\hat{\pi}) \le \move(\sigma,\hat{\pi}),
\]
because every moved job contributes at least one unit of processing time to the moved-load. Moreover, the analogue of Lemma~\ref{lem:lb-err(T)} for the moved-job error also holds, namely, if $T\ge \OPT$, then $\ErrJ(T)\le \ErrJ$.
Therefore, $\ErrJ(T)\le \Err(T)\le \Err$ and $\ErrJ \le \Err$.

The next lemma is the analogue of Lemma~\ref{lem:proj} for the moved-job error. Although the quantities $\ErrJ$ and $\ErrJ(T)$ are defined with respect to the original prediction $\hat{\pi}$, in the repair phase we work with the projected feasible schedule $\hat{\sigma}=\Proj(\hat{\pi})$. The lemma below shows that projecting the prediction cannot increase the number of moved jobs.

\begin{lemma}[Projection monotonicity (moved jobs)]\label{lem:proj-jobs}
For every feasible schedule $\sigma$,\\
$\moveJobs(\sigma,\hat{\sigma}) \le \moveJobs(\sigma,\hat{\pi})$, where $\hat{\sigma}=\Proj(\hat{\pi})$.
\end{lemma}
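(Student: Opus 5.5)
The plan is to mirror the proof of Lemma~\ref{lem:proj}, replacing the weighted indicator sum by an unweighted one. We fix a feasible schedule $\sigma$ and write both quantities as sums of indicators over jobs:
\[
\moveJobs(\sigma,\hat\sigma)=\sum_{j\in[n]}\Ind[\sigma(j)\neq\hat\sigma(j)],\qquad \moveJobs(\sigma,\hat\pi)=\sum_{j\in[n]}\Ind[\sigma(j)\neq\hat\pi(j)].
\]
It then suffices to show, job by job, that $\Ind[\sigma(j)\neq\hat\sigma(j)]\le\Ind[\sigma(j)\neq\hat\pi(j)]$, and to sum over all $j\in[n]$.

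We split into the same two cases used by the projection procedure $\Proj$. If $\hat\pi(j)\in\Gamma(j)$, then $\Proj$ keeps the predicted machine, so $\hat\sigma(j)=\hat\pi(j)$, and the two indicators coincide. If $\hat\pi(j)\notin\Gamma(j)$, then feasibility of $\sigma$ gives $\sigma(j)\in\Gamma(j)$, hence $\sigma(j)\neq\hat\pi(j)$. The right-hand indicator therefore equals $1$, and it dominates the left-hand indicator, which lies in $\{0,1\}$.

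There is no real obstacle here. The only thing to check is that the argument never relies on which eligible machine $\Proj$ chooses for a reassigned job. It does not: in the second case the bound holds regardless of $\hat\sigma(j)$. As a remark, one could alternatively deduce the statement from Lemma~\ref{lem:proj} applied to the instance with all processing times set to $1$, since $\Proj$ does not depend on the processing times except through least-loaded tie choices. The direct indicator argument avoids that subtlety, so we use it.
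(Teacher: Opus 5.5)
Your proof is correct and matches the paper's argument exactly: you compare indicators job by job, split on whether $\hat\pi(j)\in\Gamma(j)$, and sum over jobs. One small note on your closing aside, which you do not rely on: $\Proj$ depends on processing times through which eligible machine is least loaded, not only through tie-breaking, so that reduction would also need the observation that Lemma~\ref{lem:proj} holds for any choice of eligible machine.
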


\begin{proof}
Fix a feasible schedule $\sigma$ and a job $j$.

If $\hat{\pi}(j)\in \Gamma(j)$, then $\hat{\sigma}(j)=\hat{\pi}(j)$, so
$ \mathbf{1}[\sigma(j)\neq \hat{\sigma}(j)] =
\mathbf{1}[\sigma(j)\neq \hat{\pi}(j)].$

If $\hat{\pi}(j)\notin \Gamma(j)$, then feasibility of $\sigma$ implies $\sigma(j)\in \Gamma(j)$, hence $\sigma(j)\neq \hat{\pi}(j)$. Therefore,
$
\mathbf{1}[\sigma(j)\neq \hat{\sigma}(j)]
\le
\mathbf{1}[\sigma(j)\neq \hat{\pi}(j)].
$
Summing over all jobs gives the claim. 
\end{proof}

Next, we bound the number of machines touched by moved jobs as we did for moved-load.

\begin{lemma}[Incident set size bound under moved-job budget]
\label{lem:inc-size-jobs}
\leavevmode\\
If $\moveJobs(\sigma,\hat\sigma)\le M$, then $|A_{\mathrm{inc}}(\sigma,\hat\sigma)|\le 2M$.
\end{lemma}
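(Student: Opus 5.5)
The argument mirrors Lemma~\ref{lem:inc-size}, with the moved-job count used directly in place of the moved-load. In fact it is simpler, because we no longer need integrality of processing times to turn a load budget into a bound on the number of moved jobs.

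First, recall that $A_{\mathrm{inc}}(\sigma,\hat\sigma)$ is the union, over moved jobs $j\in M(\sigma,\hat\sigma)$, of the two-element (or one-element) sets $\{\hat\sigma(j),\sigma(j)\}$. That is,
\[
A_{\mathrm{inc}}(\sigma,\hat\sigma)=\bigcup_{j\in M(\sigma,\hat\sigma)}\{\hat\sigma(j),\sigma(j)\}.
\]
Each set in this union has cardinality at most $2$; for a moved job it has exactly $2$, since $\sigma(j)\neq\hat\sigma(j)$.

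Next, apply the union bound on cardinalities to get
\[
|A_{\mathrm{inc}}(\sigma,\hat\sigma)|\le 2\,|M(\sigma,\hat\sigma)| = 2\,\moveJobs(\sigma,\hat\sigma)\le 2M,
\]
where the equality is the definition of $\moveJobs$ and the last inequality is the hypothesis.

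There is no real obstacle here. The only point to check is that the definition of $A_{\mathrm{inc}}$ counts only origin and destination machines of moved jobs, so unmoved jobs contribute nothing.
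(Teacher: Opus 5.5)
Your proof is correct and matches the paper's argument: each moved job contributes at most its origin and destination machine to $A_{\mathrm{inc}}(\sigma,\hat\sigma)$, so $|A_{\mathrm{inc}}(\sigma,\hat\sigma)|\le 2|M(\sigma,\hat\sigma)| = 2\,\moveJobs(\sigma,\hat\sigma)\le 2M$. You are also right that, unlike in Lemma~\ref{lem:inc-size}, integrality of processing times is not needed here.
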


\begin{proof}
If at most $M$ jobs are moved, then $|M(\sigma,\hat\sigma)|\le M$.
Each moved job involves at most two machines (its predicted machine and its assigned machine) to
$A_{\mathrm{inc}}$, hence
$|A_{\mathrm{inc}}(\sigma,\hat\sigma)|\le 2|M(\sigma,\hat\sigma)|\le 2M$. 
\end{proof}

\subsection{Repair Oracle under Moved Jobs}
\label{subsec:dp-jobs-rounded-only}

We now give a fixed-set dynamic program, closely mirroring the one in Section~\ref{subsec:repair}. The crucial difference is that we apply a standard rounding step (as in FPTAS analyses for knapsack and related packing DPs), which reduces the state space to polynomial size and thereby removes any pseudo-polynomial dependence on the jobs' processing times. This is necessary to get an $\mathrm{XP}$ algorithm with respect to the number of moved jobs.

\begin{lemma}[Fixed-set Repair oracle (moved jobs)]
\label{lem:fixed-dp-jobs-rounded}
Fix $\varepsilon\in(0,1)$, a threshold makespan $T$, a moved-job budget $M\in\mathbb{Z}_{\ge 1}$,
and a feasible predicted schedule $\hat\sigma$.
Let $A\subseteq[m]$ satisfy $O(T)\subseteq A$. Write $A=\{i_1,\dots,i_a\}$ with $a=|A|\ge 1$, and let
$J_A = \{j\in[n]: \hat\sigma(j)\in A\}$.
There is a dynamic program (DP) running in time
$O\!\left(|J_A|\cdot a \cdot (2B+1)^{a-1}\right)$,
where  $B=\left\lceil \frac{3M^2}{\varepsilon}\right\rceil$,
that outputs either \textsc{Fail} or a feasible schedule $\sigma$ satisfying
\begin{enumerate}
\item $\sigma(j)=\hat\sigma(j)$ for all $j\notin J_A$;
\item $\sigma(j)\in \Gamma(j)\cap A$ for all $j\in J_A$;
\item $\moveJobs(\sigma,\hat\sigma)\le M$;
\item $\Cmax(\sigma)\le (1+\varepsilon)T$.
\end{enumerate}
Moreover, if there exists a feasible schedule $\sigma'$ satisfying (1)--(3) and $\Cmax(\sigma')\le T$,
then the DP always outputs a schedule (i.e., it does not output \textsc{Fail}).
\end{lemma}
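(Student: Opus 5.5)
The plan is to reuse the dynamic program of Lemma~\ref{lem:fixed-dp-load} almost verbatim, with two changes. First, the cost of a transition counts moved \emph{jobs} instead of moved load. Second, the state records \emph{rounded} net load changes, so that the state space no longer depends on the magnitudes of the $p_j$. I will fix the rounding unit
\[
u=\frac{\varepsilon T}{3M}
\]
and define rounded sizes $\tilde p_j=\lfloor p_j/u\rfloor$. I will only allow a job $j$ to be moved if $p_j\le T$. This loses nothing for the completeness direction, because a schedule $\sigma'$ with $\Cmax(\sigma')\le T$ never places a job with $p_j>T$ on any machine, so it cannot move one. As before, condition~(1) together with $O(T)\subseteq A$ means only the machines of $A$ need to be checked.

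The DP processes $J_A=\{j_1,\dots,j_q\}$ in order. For each job it chooses $u'\in\Gamma(j)\cap A$ and updates the coordinates $x_r$ ($r\le a-1$) by $\tilde p_j(\mathbf{1}[u'=i_r]-\mathbf{1}[\hat\sigma(j)=i_r])$; the coordinate for $i_a$ is implicit, as in Lemma~\ref{lem:fixed-dp-load}. The cost of the transition is $\mathbf{1}[u'\neq\hat\sigma(j)]$, and the DP stores the minimum number of moved jobs for each state, discarding any value larger than $M$.

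Along any path the number of moved jobs is at most $M$, and each has rounded size at most $T/u=3M/\varepsilon$. Hence both the rounded inflow and the rounded outflow of every machine, at every intermediate step, lie in $[0,3M^2/\varepsilon]$. So every coordinate satisfies $|x_r|\le B$, and restricting states to $[-B,B]^{a-1}$ with $|b_a|\le B$ is harmless. This yields at most $(2B+1)^{a-1}$ states per layer and at most $a$ transitions per state, which gives the claimed running time $O(|J_A|\cdot a\cdot(2B+1)^{a-1})$.

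The acceptance test for a final state is $\hat L_{i_r}+u\,\tilde b_r\le T+Mu$ for all $r\in[a]$. The step I expect to be the main obstacle is choosing this slack so that soundness and completeness hold simultaneously. The key estimate is that, for a machine with moved inflow set $I_r$ and outflow set $O_r$, the true net change $b_r$ and the scaled rounded one satisfy
\[
u\tilde b_r-|O_r|\,u\;\le\; b_r\;\le\; u\tilde b_r+|I_r|\,u ,
\]
because $u\tilde p_j\le p_j<u\tilde p_j+u$ for each job, and $|I_r|,|O_r|\le M$.

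For soundness, any accepted path gives $L_{i_r}(\sigma)=\hat L_{i_r}+b_r\le T+2Mu=(1+\tfrac{2\varepsilon}{3})T\le(1+\varepsilon)T$. Conditions (1)--(3) then hold by construction, exactly as in the correctness part of Lemma~\ref{lem:fixed-dp-load}.

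For completeness, take $\sigma'$ satisfying (1)--(3) with $\Cmax(\sigma')\le T$ and follow the DP path that copies $\sigma'$. This path stays within the state bounds and has cost at most $M$. By the left inequality, $u\tilde b_r\le b_r+Mu\le T-\hat L_{i_r}+Mu$, so the final state is accepted and the DP does not output \textsc{Fail}. Finally, reconstructing the schedule from stored predecessor pointers is standard.
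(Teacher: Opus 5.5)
Your proposal is correct and follows the paper's proof: you round sizes to a grid of width $\Theta(\varepsilon T/M)$, run the knapsack-style DP over rounded net-change vectors with a moved-job cost, accept at threshold $T$ plus $M$ grid units, and lose at most $2M$ grid units in the soundness direction. The differences are cosmetic. You round down with unit $\varepsilon T/(3M)$, giving $(1+\tfrac{2\varepsilon}{3})T$, and use one-sided error bounds, whereas the paper rounds up with $\rho=\varepsilon T/(2M)$ and a symmetric $M\rho$ bound. Your explicit restriction to moving only jobs with $p_j\le T$ justifies the state-space bound slightly more cleanly than the paper's appeal to $T\ge\OPT$, which is not among the lemma's hypotheses.
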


\begin{proof}
We first round the processing times of the jobs.
Set
\[
\rho = \frac{\varepsilon T}{2M},
\qquad
w_j = \left\lceil \frac{p_j}{\rho}\right\rceil \in \mathbb{Z}_{\ge 1},
\qquad
p'_j = \rho w_j .
\]
Then for every job $j$,
\[
p_j \le p'_j < p_j+\rho
\qquad\text{and}\qquad
0 \le p'_j - p_j < \rho .
\]
Let $B = \left\lceil \frac{3M^2}{\varepsilon}\right\rceil$.
Since $T\ge \OPT$ implies $p_j\le T$ for all jobs, we have
$w_j \le \lceil T/\rho\rceil = \lceil 2M/\varepsilon\rceil$. Hence the total rounded inflow/outflow in $\rho$-units
contributed by at most $M$ moved jobs to any machine is at most
\[
M\cdot \left\lceil \frac{2M}{\varepsilon}\right\rceil
\le M\left(\frac{2M}{\varepsilon}+1\right)
\le \frac{3M^2}{\varepsilon}
\le B,
\]
where the second inequality uses $M \ge 1 > \varepsilon$.

Since $O(T)\subseteq A$, every machine $i\notin A$ satisfies $\hat L_i\le T$.
Under conditions~(1)--(2), the assignment on machines outside $A$ is unchanged, hence
$L_i(\sigma)=\hat L_i\le T$ for all $i\notin A$.
Therefore, among schedules satisfying~(1)--(2), it suffices to enforce the constraint $L_i(\sigma)\le (1+\varepsilon)T$ for machines in $A$.

For schedules satisfying~(1)--(2), define for each $r\in[a]$ the rounded net change
\[
\overline b_r = \sum_{j:\sigma(j)=i_r} w_j \;-\; \sum_{j:\hat\sigma(j)=i_r} w_j .
\]
Since jobs are only reassigned within $A$, we have $\sum_{r=1}^a \overline b_r=0$.

For any schedule satisfying condition~(3), fix $r$ and let $\overline{\Out}_r$ (resp.\ $\overline{\In}_r$) be the total rounded size
of moved jobs in $\rho$-units leaving (resp.\ entering) machine $i_r$. At most $M$ moved jobs can leave or enter $i_r$, and each moved job has
rounded size at most $\lceil 2M/\varepsilon\rceil$, hence
\[
\overline{\Out}_r\le M\Bigl\lceil \frac{2M}{\varepsilon}\Bigr\rceil \le B,
\qquad
\overline{\In}_r\le M\Bigl\lceil \frac{2M}{\varepsilon}\Bigr\rceil \le B.
\]
Therefore,
$|\overline b_r|=|\overline{\In}_r-\overline{\Out}_r|\le \max\{\overline{\In}_r,\overline{\Out}_r\}\le B$.
Thus it suffices to search over vectors with $\overline b_r\in[-B,B]$, for all $r \in [a]$.

We store $\overline x=(\overline b_1,\dots,\overline b_{a-1})$ and set the last coordinate implicitly as
$\overline b_a(\overline x)= -\sum_{r=1}^{a-1} \overline b_r$.

The state space is
$\overline{\mathcal{X}}
=\bigl\{\overline x\in[-B,B]^{a-1}\cap\mathbb{Z}^{a-1} : \overline b_a(\overline x)\in[-B,B]\bigr\}$.

Order $J_A=\{j_1,\dots,j_q\}$, where $q=|J_A|$.
For $\ell\in\{0,\dots,q\}$ and $\overline x\in\overline{\mathcal{X}}$, we let $\DP[\ell,\overline x]$ be the minimum number of moved jobs incurred by assigning the first $\ell$ jobs to machines in $\Gamma(\cdot)\cap A$ while inducing rounded net changes $\overline x$ on machines $i_1,\dots,i_{a-1}$:

\begin{multline*}
\DP[\ell,\overline x] =
\min \Bigl\{\#\{t\le \ell:\varphi(j_t)\neq \hat\sigma(j_t)\}:\\
\varphi(j_t)\in\Gamma(j_t)\cap A,\ \text{and induced rounded net change is }\overline x\Bigr\}.
\end{multline*}

Initialize $\DP[0,\vec 0]=0$ and $\DP[0,\overline x]=+\infty$ for $\overline x\neq \vec 0$.

Fix $\ell<q$, a state $\overline x\in\overline{\mathcal{X}}$ with $\DP[\ell,\overline x]<\infty$, and let $j=j_{\ell+1}$.
For each choice $u\in\Gamma(j)\cap A$, define the new state $\overline x'$ by updating $\overline x$ as follows.
Only the coordinates corresponding to $u$ and to $\hat\sigma(j)$ can change. Hence $\overline x'$ can be computed from $\overline x$
in $\bigO(1)$ time.
Formally, for each $r\in[a-1]$,
\[
\overline x'_r \;=\; \overline x_r \;+\; w_j\cdot\bigl(\Ind[u=i_r]-\Ind[\hat\sigma(j)=i_r]\bigr).
\]
The transition cost is
\[
\cost_J(j,u)=\begin{cases}
0 & \text{if } u=\hat\sigma(j),\\
1 & \text{otherwise.}
\end{cases}
\]
We keep the transition only if $\overline x'\in\overline{\mathcal{X}}$ and $\DP[\ell,\overline x]+\cost_J(j,u)\le M$, and then update
\[
\DP[\ell+1,\overline x'] \leftarrow \min\{\DP[\ell+1,\overline x'],\ \DP[\ell,\overline x]+\cost_J(j,u)\}.
\]

We accept a final state $(q,\overline x)$ if the implied rounded changes satisfy
\[
\hat L_{i_r} + \rho\cdot \overline b_r \le T + M\rho = \left(1+\frac{\varepsilon}{2}\right)T
\qquad \text{for all } r\in[a],
\]
where $(\overline b_1,\dots,\overline b_{a-1})=\overline x$ and $\overline b_a=\overline b_a(\overline x)$.
If such a final state exists, we reconstruct the corresponding schedule. Otherwise, we output \textsc{Fail}.

\paragraph*{Correctness.}
Let $\sigma$ be any schedule output by the DP, and fix a machine $i_r\in A$.
Write the true net change $b_r=L_{i_r}(\sigma)-\hat L_{i_r}$.
Jobs with $\sigma(j)=\hat\sigma(j)$ contribute equally to both schedules and cancel in the net change, hence only moved jobs contribute.
At most $M$ moved jobs can affect a given machine.
Each moved job contributes either $\pm p_j$ to $b_r$ and $\pm p'_j=\pm\rho w_j$ to $\rho\overline b_r$.
Since $|p'_j-p_j|<\rho$ for each job, we obtain
$|b_r-\rho \overline b_r|\le M\rho$.

Since the final state is accepted, $\hat L_{i_r}+\rho \overline b_r \le T+M\rho$ for all $r\in[a]$.
Therefore,
\[
L_{i_r}(\sigma)\le \hat L_{i_r}+\rho \overline b_r + M\rho \le T+2M\rho=(1+\varepsilon)T.
\]
Thus every accepted schedule has makespan at most $(1+\varepsilon)T$.

Conversely, suppose there exists a feasible schedule $\sigma'$ satisfying~(1)--(3) and $\Cmax(\sigma')\le T$.
Then for every $r\in[a]$ we have $\hat L_{i_r}+b_r'=L_{i_r}(\sigma')\le T$.
Using $|b_r'-\rho\overline b_r'|\le M\rho$ yields
\[
\hat L_{i_r}+\rho\overline b_r' \le \hat L_{i_r}+b_r'+M\rho \le T+M\rho,
\]
so $\sigma'$ satisfies the rounded acceptance constraints. Moreover, the DP can realize $\sigma'$ by choosing $u=\sigma'(j_t)$ at each step, hence it reaches an accepted state and does not output \textsc{Fail}.

\paragraph*{Running time.}
There are $q+1=|J_A|+1$ possible values of $\ell$, and for each value at most $|\overline{\mathcal{X}}|\le (2B+1)^{a-1}$ states.
From each state we try at most $|\Gamma(j_{\ell+1})\cap A|\le a$ transitions.
The next state $\overline x'$ can be computed in $\bigO(1)$ time as only two coordinates may change.
Hence the running time is
$O\!\left(|J_A|\cdot a \cdot (2B+1)^{a-1}\right)$.
\end{proof}

Again, we wrap Lemma~\ref{lem:fixed-dp-jobs-rounded} into a global oracle by enumerating all candidate sets $A\subseteq[m]$
that could contain the incident machines of a solution moving at most $M$ jobs.
If $|O(T)|>2M$, then no such set exists and the oracle immediately returns \textsc{Fail}.

\begin{definition}[Global Repair oracle with $(T,M)$ (moved jobs)]
\label{def:oracle-jobs-rounded}
Given makespan $T$, budget $M\in\mathbb{Z}_{\ge 1}$, $\varepsilon\in(0,1)$ and a feasible predicted schedule $\hat\sigma$, the procedure
$\RepairOracleJ(T,M,\varepsilon,\hat\sigma)$ enumerates all sets $A\subseteq[m]$ such that $O(T)\subseteq A$ and
$|A|\le 2M$, runs the DP of Lemma~\ref{lem:fixed-dp-jobs-rounded} for each such $A$, and returns the first schedule found
(or \textsc{Fail} if none exists).
\end{definition}

\begin{lemma}[Correctness of the Global Repair oracle (moved jobs)]
\label{lem:oracle-jobs-rounded-correct}

If there exists a feasible schedule $\sigma'$ such that 
$\Cmax(\sigma')\le T $ and  
$\moveJobs(\sigma',\hat\sigma) \le M,$
then $\RepairOracleJ(T,M,\varepsilon,\hat\sigma)$ returns a feasible schedule
$\sigma$ with $\Cmax(\sigma)\le (1+\varepsilon)T$ and  $\moveJobs(\sigma,\hat\sigma)\le M$.
\end{lemma}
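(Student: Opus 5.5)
The plan mirrors the proof of Lemma~\ref{lem:oracle-correct}: exhibit one candidate set $A$ that the oracle enumerates and for which $\sigma'$ certifies success of the fixed-set DP, and then argue that whatever the oracle returns has the stated guarantees.

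First, I locate the right set $A$. Since $\sigma'$ is feasible with $\Cmax(\sigma')\le T$, Lemma~\ref{lem:over-incident} gives $O(T)\subseteq A_{\mathrm{inc}}(\sigma',\hat\sigma)$. Since $\moveJobs(\sigma',\hat\sigma)\le M$, Lemma~\ref{lem:inc-size-jobs} gives $|A_{\mathrm{inc}}(\sigma',\hat\sigma)|\le 2M$. Two small points need attention here. If $A_{\mathrm{inc}}$ is empty, then $\sigma'=\hat\sigma$, so $O(T)=\emptyset$. In that case I would take any singleton $A\supseteq O(T)$, or note that the oracle's enumeration includes a set with $a\ge 1$ for which $\sigma'$ still satisfies (1)--(2) trivially. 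Also, the early \textsc{Fail} exit when $|O(T)|>2M$ cannot occur, because $|O(T)|\le|A_{\mathrm{inc}}|\le 2M$. Hence $A:=A_{\mathrm{inc}}(\sigma',\hat\sigma)$ (padded if empty) is among the sets enumerated by $\RepairOracleJ(T,M,\varepsilon,\hat\sigma)$.

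Second, I check that $\sigma'$ satisfies conditions (1)--(3) of Lemma~\ref{lem:fixed-dp-jobs-rounded} for this $A$.
\begin{itemize}
\item For (1): if $j\notin J_A$, then $\hat\sigma(j)\notin A$, so $j$ is not moved, since otherwise $\hat\sigma(j)\in A_{\mathrm{inc}}$; thus $\sigma'(j)=\hat\sigma(j)$.
\item For (2): if $j\in J_A$, then either $j$ is moved, in which case $\sigma'(j)\in A_{\mathrm{inc}}=A$, or $\sigma'(j)=\hat\sigma(j)\in A$; feasibility gives $\sigma'(j)\in\Gamma(j)$.
\item Condition (3) is the hypothesis, and $\Cmax(\sigma')\le T$ also holds by hypothesis.
\end{itemize}
The ``moreover'' part of Lemma~\ref{lem:fixed-dp-jobs-rounded} then says the DP for this $A$ does not output \textsc{Fail}.

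Finally, because of this, the oracle returns some schedule: the first one found, possibly for an earlier set $A''$. Every output of the fixed-set DP, for any enumerated set, satisfies (1)--(4) of Lemma~\ref{lem:fixed-dp-jobs-rounded}. So the returned $\sigma$ is feasible, has $\moveJobs(\sigma,\hat\sigma)\le M$, and has $\Cmax(\sigma)\le(1+\varepsilon)T$.

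The only delicate step is the membership verification (1)--(2), which relies on the exact definition of $A_{\mathrm{inc}}$ as the set of origins and destinations of moved jobs, together with the degenerate empty-set case. Everything else is direct invocation of earlier lemmas.
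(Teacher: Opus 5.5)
Your proof is correct and follows the paper's route: Lemma~\ref{lem:over-incident} and Lemma~\ref{lem:inc-size-jobs} show that $A=A_{\mathrm{inc}}(\sigma',\hat\sigma)$ is enumerated, and the completeness part of Lemma~\ref{lem:fixed-dp-jobs-rounded} then applies to it. Your extra checks (the empty incident set, the early \textsc{Fail} exit, and the returned schedule possibly coming from an earlier set $A$ while still satisfying (1)--(4)) are valid refinements that the paper's proof leaves implicit.
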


\begin{proof}
By Lemma~\ref{lem:over-incident}, we have $O(T)\subseteq A_{\mathrm{inc}}(\sigma',\hat\sigma)$.
By Lemma~\ref{lem:inc-size-jobs}, $|A_{\mathrm{inc}}(\sigma',\hat\sigma)|\le 2M$.
Hence the oracle enumerates $A=A_{\mathrm{inc}}(\sigma',\hat\sigma)$.
For this set $A$, the schedule $\sigma'$ satisfies the restrictions of Lemma~\ref{lem:fixed-dp-jobs-rounded}. Hence, by Lemma~\ref{lem:fixed-dp-jobs-rounded}, the fixed-set DP returns a schedule moving at most $M$ jobs and with makespan at most $(1+\varepsilon)T$.

\end{proof}

\begin{lemma}[Running time of the Global Repair oracle (moved jobs)]
\label{lem:oracle-jobs-rounded-runtime}
Fix $T$, $M\in\mathbb{Z}_{\ge 1}$, $\varepsilon\in(0,1)$ and a feasible predicted schedule $\hat\sigma$.
The procedure $\RepairOracleJ(T,M,\varepsilon,\hat\sigma)$ runs in time
$O\!\left(n\cdot m^{2M}\cdot (2B+1)^{2M+1}\right)$, where $B=\left\lceil \frac{3M^2}{\varepsilon}\right\rceil$.
\end{lemma}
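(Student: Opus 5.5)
The proof follows the same pattern as the running-time bound for $\RepairOracle$ in the moved-load setting. The total time is the number of candidate incident sets $A$ enumerated by $\RepairOracleJ$ multiplied by the cost of one call to the fixed-set DP of Lemma~\ref{lem:fixed-dp-jobs-rounded}. There is also a small preprocessing cost.

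\textbf{Step 1: preprocessing.} Computing $\hat L_i$ for all machines and the set $O(T)$ takes $\bigO(n+m)$ time. If $|O(T)|>2M$, the oracle returns \textsc{Fail} immediately. Computing the rounding parameter $\rho$ and all values $w_j$ takes $\bigO(n)$ time. These costs are dominated by the final bound, since $m\le m^{2M}$ whenever $m\ge 1$ and $M\ge 1$.

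\textbf{Step 2: counting candidate sets.} The oracle only enumerates sets $A\supseteq O(T)$ with $|A|\le 2M$. Their number is at most
\[
\sum_{t=0}^{2M}\binom{m}{t}\le (2M+1)\,m^{2M}.
\]
The enumeration can be organised so that each set is produced in $\bigO(M)$ amortized time. One way is to enumerate all subsets of size at most $2M$ and discard those that do not contain $O(T)$. Another is to enumerate only the supersets directly. Either way, the enumeration overhead is absorbed by the DP cost.

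\textbf{Step 3: cost per set.} For each enumerated $A$, Lemma~\ref{lem:fixed-dp-jobs-rounded} gives running time $\bigO(|J_A|\cdot a\cdot(2B+1)^{a-1})$. Using $|J_A|\le n$ and $a\le 2M$, this is at most $\bigO(n\cdot 2M\cdot(2B+1)^{2M-1})$. Forming $J_A$ takes $\bigO(n)$ time, which fits within this bound.

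\textbf{Step 4: combining.} Multiplying the two counts gives
\[
\bigO\bigl((2M+1)\,m^{2M}\cdot n\cdot 2M\,(2B+1)^{2M-1}\bigr).
\]
It remains to absorb the factor $(2M+1)\cdot 2M$ into $(2B+1)^2$. Since $\varepsilon<1$ and $M\ge 1$, we have $B=\lceil 3M^2/\varepsilon\rceil\ge 3M^2$. Hence $2B+1\ge 2M+1$ and also $2B+1\ge 2M$, so
\[
(2M+1)\cdot 2M\le (2B+1)^2.
\]
This yields $\bigO\bigl(n\cdot m^{2M}\cdot(2B+1)^{2M+1}\bigr)$, as claimed.

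There is no real obstacle here. The only point needing care is this last absorption step, together with checking that the enumeration and preprocessing overheads do not exceed the stated bound.
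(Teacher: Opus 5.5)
Your proposal is correct and follows essentially the same route as the paper. Both bound the number of candidate sets by $\sum_{t=0}^{2M}\binom{m}{t}\le(2M+1)m^{2M}$, bound each fixed-set DP call by $\bigO(n\cdot 2M\cdot(2B+1)^{2M-1})$, and multiply; you additionally spell out the absorption $(2M+1)\cdot 2M\le(2B+1)^2$ via $B\ge 3M^2\ge M$, and the preprocessing and enumeration overheads, which the paper leaves implicit.
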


\begin{proof}
The oracle enumerates at most $\sum_{t=0}^{2M}\binom{m}{t}\le (2M+1)m^{2M}$ candidate sets.
For each such $A$, the DP (Lemma~\ref{lem:fixed-dp-jobs-rounded}) runs in time
\[
O\!\left(|J_A|\cdot |A|\cdot (2B+1)^{|A|-1}\right)
\le
O\!\left(n\cdot 2M\cdot (2B+1)^{2M-1}\right),
\]
since $|J_A|\le n$ and $|A|\le 2M$.
Multiplying the number of candidates by the worst-case DP time yields
\[
O\!\left(\Bigl(\sum_{t=0}^{2M}\binom{m}{t}\Bigr)\cdot n\cdot 2M\cdot (2B+1)^{2M-1}\right)
\le
O\!\left(n\cdot m^{2M}\cdot (2B+1)^{2M+1}\right),
\]
as claimed. 
\end{proof}

\subsection{Global Repair under Moved Jobs}
\label{subsec:global-repair-jobs}

We now give a procedure to recover a schedule with makespan at most $(1+\varepsilon) T$ that does not require knowing the optimal moved-job budget $\ErrJ(T)$ in advance.

\begin{algorithm}[htbp]
\caption{\textsc{GlobalRepairJobs}$(T,\hat\pi,\varepsilon)$}
\label{alg:global-repair-jobs}
\begin{algorithmic}[1]
\Require Threshold makespan $T\ge \OPT$, possibly infeasible prediction $\hat\pi$, parameter $\varepsilon\in(0,1)$
\Ensure A feasible schedule $\sigma$ with $\Cmax(\sigma)\le (1+\varepsilon)T$
\State $\hat\sigma \gets \Proj(\hat\pi)$
\State Compute loads $\hat L_i$ and $\Delta(T)\gets \sum_{i\in[m]} (\hat L_i - T)^+$
\If{$\Delta(T)=0$}
  \State \Return $\hat\sigma$
\EndIf
\State $M \gets 1$
\While{\textbf{true}}
  \State $\sigma \gets \RepairOracleJ(T,M,\varepsilon,\hat\sigma)$
  \If{$\sigma \neq \textsc{Fail}$}
    \State \Return $\sigma$
  \EndIf
  \State $M \gets 2M$
\EndWhile
\end{algorithmic}
\end{algorithm}

\begin{theorem}[Global Repair at makespan $(1+\varepsilon)T$ (moved-job error)]
\label{thm:global-repair-jobs}
Fix any makespan threshold $T\ge \OPT$, let $\hat\sigma=\Proj(\hat\pi)$, and fix $\varepsilon\in(0,1)$.
Algorithm~\ref{alg:global-repair-jobs} returns a feasible schedule $\sigma$ with $\Cmax(\sigma)\le (1+\varepsilon)T$.
Moreover, if $\Delta(T)>0$, it stops for some integer $M$ satisfying:
$M < 2\ErrJ(T) \le 2\ErrJ$, and the number of oracle calls is at most $\lceil \log_2 \ErrJ(T)\rceil + 1 \le \lceil \log_2 \ErrJ\rceil + 1$.

If $\Delta(T)=0$, it directly outputs the predicted schedule $\hat\sigma$.

\end{theorem}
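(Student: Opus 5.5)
The proof mirrors that of Theorem~\ref{thm:parametric-repair}, with Lemma~\ref{lem:oracle-jobs-rounded-correct} in place of Lemma~\ref{lem:oracle-correct} and the moved-job error in place of the moved-load.

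\emph{Case $\Delta(T)=0$.} Then $\hat L_i\le T$ for all $i$. The algorithm returns $\hat\sigma$, which is feasible by construction of $\Proj$ and satisfies $\Cmax(\hat\sigma)\le T\le(1+\varepsilon)T$.

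\emph{Case $\Delta(T)>0$.} Some machine is overloaded, so $\hat\sigma$ itself is not a valid answer. Any schedule of makespan at most $T$ must therefore differ from $\hat\sigma$ in at least one job. Combined with Lemma~\ref{lem:proj-jobs}, this gives $\ErrJ(T)\ge 1$. It also ensures $\ErrJ(T)$ is well defined, since $T\ge\OPT$ means a feasible schedule of makespan at most $T$ exists.

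\emph{Budget at which the oracle must succeed.} The doubling sequence $M=1,2,4,\dots$ has a first value $M^*$ with $M^*\ge \ErrJ(T)$. This happens after at most $\lceil\log_2\ErrJ(T)\rceil+1$ calls. By minimality, either $M^*=1\le \ErrJ(T)<2\ErrJ(T)$, or $M^*/2<\ErrJ(T)$; in both cases $M^*<2\ErrJ(T)$.

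Take $\sigma'$ attaining $\ErrJ(T)$: it is feasible, has $\Cmax(\sigma')\le T$, and $\moveJobs(\sigma',\hat\pi)=\ErrJ(T)$. Lemma~\ref{lem:proj-jobs} gives $\moveJobs(\sigma',\hat\sigma)\le \ErrJ(T)\le M^*$. So the hypothesis of Lemma~\ref{lem:oracle-jobs-rounded-correct} holds at budget $M^*$, and that call returns a feasible schedule with makespan at most $(1+\varepsilon)T$.

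\emph{Stopping point and output guarantee.} The algorithm stops at the first successful call, at some $M\le M^*<2\ErrJ(T)$. The number of calls is then bounded by $\lceil\log_2\ErrJ(T)\rceil+1$.

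The one step needing care is that an earlier successful call, with $M<M^*$, also returns an acceptable schedule. This holds because of how Lemma~\ref{lem:fixed-dp-jobs-rounded} is stated: every non-\textsc{Fail} output satisfies (1)--(4), so it is feasible with makespan at most $(1+\varepsilon)T$ regardless of whether a schedule of makespan $T$ exists at that budget. Since $\RepairOracleJ$ only returns outputs of this DP, any returned schedule is valid.

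\emph{Replacing $\ErrJ(T)$ by $\ErrJ$.} Every optimal schedule has makespan $\OPT\le T$, so it is a candidate in the definition of $\ErrJ(T)$. Hence $\ErrJ(T)\le\ErrJ$, the moved-job analogue of Lemma~\ref{lem:lb-err(T)}. This yields $M<2\ErrJ(T)\le 2\ErrJ$, and $\lceil\log_2\ErrJ(T)\rceil+1\le\lceil\log_2\ErrJ\rceil+1$ by monotonicity of the logarithm. Here $\ErrJ\ge\ErrJ(T)\ge1$, so the logarithm is defined.
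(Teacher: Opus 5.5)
Your proposal is correct and follows the paper's proof essentially step for step. It covers the trivial case $\Delta(T)=0$, the doubling argument that finds the first budget $M^*\ge \ErrJ(T)$ with $M^*<2\ErrJ(T)$, success at $M^*$ via Lemma~\ref{lem:proj-jobs} and Lemma~\ref{lem:oracle-jobs-rounded-correct}, and the bound $\ErrJ(T)\le\ErrJ$; your explicit justifications that $\ErrJ(T)\ge 1$ and that an earlier successful call is already valid (every non-\textsc{Fail} output of Lemma~\ref{lem:fixed-dp-jobs-rounded} satisfies (1)--(4)) fill in points the paper only asserts, without changing the route.
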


\begin{proof}
If $\Delta(T)=0$, the algorithm directly outputs $\hat{\sigma}$ and $\Cmax(\hat\sigma) \le T \le (1+\varepsilon)T$.

Assume $\Delta(T)>0$. Since $T\ge \OPT$, a feasible schedule with makespan at most $T$ exists, hence $\ErrJ(T)<\infty$. Moreover, $\Delta(T) >0$ implies $\ErrJ \ge \ErrJ(T) \ge1$.
The algorithm calls the oracle for $M=1,2,4,\dots$ until success.
The doubling sequence reaches a first value $M^*\ge \ErrJ(T)$ after at most $\lceil\log_2 \ErrJ(T)\rceil+1$ iterations, and by minimality
$M^*<2\ErrJ(T)\le 2\ErrJ$.
We show that the oracle succeeds when called with budget $M^*$. Hence the algorithm stops no later than this call.

For the budget $M^*$, by definition of $\ErrJ(T)$ there exists a feasible schedule $\sigma'$ with
\[
\Cmax(\sigma')\le T
\quad\text{and}\quad
\moveJobs(\sigma',\hat\pi)=\ErrJ(T)\le M^*.
\]
By Lemma~\ref{lem:proj-jobs}, we have
$\moveJobs(\sigma',\hat\sigma)\le \moveJobs(\sigma',\hat\pi)\le M^*$.
Hence, by Lemma~\ref{lem:oracle-jobs-rounded-correct}, the oracle succeeds when called with budget $M^*$ and returns a feasible schedule $\sigma$ with
$\Cmax(\sigma)\le (1+\varepsilon)T$.
Therefore, Algorithm~\ref{alg:global-repair-jobs} returns such a schedule.

Since the algorithm stops no later than the call with budget $M^*$, the budget $M$ at which it stops satisfies $M\le M^*<2\ErrJ(T)\le 2\ErrJ$.
The bound on the number of oracle calls follows from the length of the doubling sequence.
\end{proof}

Consequently, since the successful oracle call uses $M < 2\ErrJ$ we get the following corollary.

\begin{corollary}[Running time of \textsc{GlobalRepairJobs}] \label{cor:time-jobs}
Assume that $T\ge \OPT$ and $\Delta(T)>0$. Then Algorithm~\ref{alg:global-repair-jobs} returns a feasible schedule $\sigma$ with
$\Cmax(\sigma)\le (1+\varepsilon)T$
in time
$\widetilde{\bigO}\!\left(
n \cdot m^{\bigO(\ErrJ)} \cdot
\left(\frac{\ErrJ}{\varepsilon}\right)^{\bigO(\ErrJ)}
\right)$.
\end{corollary}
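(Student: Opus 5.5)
The running time follows by combining Theorem~\ref{thm:global-repair-jobs} with Lemma~\ref{lem:oracle-jobs-rounded-runtime}, mirroring how Corollary~\ref{cor:time-load} follows from Theorem~\ref{thm:parametric-repair}. The preprocessing is cheap. Computing $\hat\sigma=\Proj(\hat\pi)$ takes polynomial time, at most $\bigO(nm)$ by scanning eligible machines for each job. Computing the loads $\hat L_i$, $\Delta(T)$ and $O(T)$ costs $\bigO(n+m)$. These terms are absorbed into the final bound, using that each job contributes a factor at least $n$ and that $m^{\bigO(\ErrJ)}$ with $\ErrJ\ge 1$ dominates $m$. Since $\Delta(T)>0$, we have $\ErrJ\ge\ErrJ(T)\ge 1$, so all logarithms are well defined.

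\textbf{Oracle calls.} By Theorem~\ref{thm:global-repair-jobs}, the algorithm makes at most $\lceil\log_2\ErrJ\rceil+1$ calls, with budgets $M=1,2,4,\dots$, and every budget used satisfies $M\le M^*<2\ErrJ$. By Lemma~\ref{lem:oracle-jobs-rounded-runtime}, a call with budget $M$ costs $\bigO(n\cdot m^{2M}\cdot(2B_M+1)^{2M+1})$, where $B_M=\lceil 3M^2/\varepsilon\rceil$. This bound is monotone increasing in $M$, so every call costs at most the bound at $M=2\ErrJ$, namely
\[
\bigO\!\left(n\cdot m^{4\ErrJ}\cdot\left(2\left\lceil\frac{12\ErrJ^2}{\varepsilon}\right\rceil+1\right)^{4\ErrJ+1}\right).
\]

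\textbf{Simplifying the last factor.} Since $\varepsilon<1$ and $\ErrJ\ge1$, we have $2\lceil 12\ErrJ^2/\varepsilon\rceil+1\le \bigO(\ErrJ^2/\varepsilon)\le \bigO\bigl((\ErrJ/\varepsilon)^2\bigr)$. Raising this to the power $4\ErrJ+1=\bigO(\ErrJ)$ gives $(\ErrJ/\varepsilon)^{\bigO(\ErrJ)}$. The constant factor $c^{4\ErrJ+1}$ coming from the $\bigO$ inside the base is also absorbed, since $\ErrJ/\varepsilon\ge1$; if $\ErrJ/\varepsilon$ is close to $1$, $c^{\bigO(\ErrJ)}$ is instead absorbed by $m^{\bigO(\ErrJ)}$ whenever $m\ge2$, and the case $m=1$ is trivial. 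Multiplying by the $\bigO(\log\ErrJ)$ oracle calls, which the $\widetilde{\bigO}$ hides, yields the claimed $\widetilde{\bigO}\bigl(n\cdot m^{\bigO(\ErrJ)}\cdot(\ErrJ/\varepsilon)^{\bigO(\ErrJ)}\bigr)$.

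Correctness, i.e.\ $\Cmax(\sigma)\le(1+\varepsilon)T$, is exactly Theorem~\ref{thm:global-repair-jobs}. The only step needing any care is absorbing the constants into the exponent, which works because $\ErrJ\ge1$ and $\varepsilon<1$.
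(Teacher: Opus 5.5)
Your proof is correct and follows the same route as the paper, which obtains the corollary by combining Theorem~\ref{thm:global-repair-jobs} (at most $\lceil\log_2\ErrJ\rceil+1$ oracle calls, each with budget $M<2\ErrJ$) with the per-call bound of Lemma~\ref{lem:oracle-jobs-rounded-runtime}. Your extra care in absorbing the $c^{\bigO(\ErrJ)}$ constant is sound and more explicit than the paper. The case $m=1$ is in fact vacuous: there $\hat\sigma$ is the unique feasible schedule, so its makespan is $\OPT\le T$ and $\Delta(T)=0$.
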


\subsection{Exact Moved-Job Repair for Two-Valued Restricted Assignment}
\label{sec:2ra}

We now show that for the two-valued case ($p_j \in \{1,p_B\}$, with integer $p_B>1$), the $(1+\varepsilon)$-relaxation in Theorem~\ref{thm:global-repair-jobs} can be removed.
Since all processing times are integers, the makespan of every feasible schedule is an integer, and therefore $\OPT \in \mathbb{Z}_{\ge 1}$. Hence, if $T \ge \OPT$, then also $\lfloor T \rfloor \ge \OPT$. It follows that a schedule of makespan at most $T$ exists if and only if a schedule of makespan at most $\lfloor T \rfloor$ exists. Therefore, in this subsection, we may assume without loss of generality that $T$ is an integer (or we can replace it by $\lfloor T \rfloor$).

\begin{lemma}[Fixed-set exact repair oracle] \label{lem:fixed-set-exact}
Fix the two-valued case $p_j \in \{1,p_B\}$. Fix an integer threshold makespan $T$, a moved-job
budget $M \in \mathbb{Z}_{\ge 1}$, and a feasible predicted schedule $\hat{\sigma}$.
Let $A \subseteq [m]$ satisfy $O(T) \subseteq A$. Write $A=\{i_1,\dots,i_a\}$ with
$a=|A|\ge 1$, and define
$J_A = \{j\in [n] : \hat{\sigma}(j)\in A\}$.
There exists a procedure running in time
$O\!\left(|J_A|\cdot a \cdot (2M+1)^{a-1}
\;+\; (2M+1)^{a-1}\cdot \Phi_{\text{flow}}(n,m)\right)$,
where $\Phi_{\text{flow}}(n,m)$ is the worst-case running time of one min-cost integral flow instance,
that decides whether there exists a feasible schedule $\sigma$ such that:
\begin{enumerate}
    \item $\sigma(j)=\hat{\sigma}(j)$ for all $j\notin J_A$;
    \item $\sigma(j)\in \Gamma(j)\cap A$ for all $j\in J_A$;
    \item $\mathrm{moveJobs}(\sigma,\hat{\sigma})\le M$;
    \item $\Cmax(\sigma)\le T$.
\end{enumerate}
If such a schedule exists, the procedure outputs one. Otherwise it outputs \textsc{Fail}.
\end{lemma}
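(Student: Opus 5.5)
Following the paper's own outline, I would split the jobs in $J_A$ into big jobs ($p_j=p_B$) and small jobs ($p_j=1$). The big jobs are handled by a moved-job DP in the style of Lemma~\ref{lem:fixed-dp-jobs-rounded}, but with no rounding. Each machine's net change is tracked \emph{in units of big jobs}: $c_r\in[-M,M]$ counts big jobs entering minus big jobs leaving $i_r$. This is valid because at most $M$ jobs move, so the number of big jobs entering or leaving any machine is at most $M$. As before, we store $(c_1,\dots,c_{a-1})$ and set $c_a=-\sum_{r<a} c_r$, which gives $(2M+1)^{a-1}$ states. The DP value $\DP[\ell,c]$ is the minimum number of moved big jobs among the first $\ell$ big jobs that realizes $c$. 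Transitions are over $u\in\Gamma(j)\cap A$, with cost $1$ iff $u\neq\hat\sigma(j)$. Small jobs in $J_A$ are simply skipped in this DP (or are processed as zero-effect items if we want to match the stated $|J_A|$ factor), so this phase costs $O(|J_A|\cdot a\cdot(2M+1)^{a-1})$.

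For every final state $c$ with finite value $k_c=\DP[q,c]\le M$, the big-job loads on $A$ are fixed at $\hat L^{B}_{i_r}+p_Bc_r$, where $\hat L^{B}_{i_r}$ is the big-job load under $\hat\sigma$. Each machine $i_r$ then has residual capacity $\kappa_r=T-\hat L^{B}_{i_r}-p_Bc_r$ for small jobs; if any $\kappa_r<0$, we discard $c$. The small jobs of $J_A$ are then assigned by a min-cost flow. The network has source $\to$ each small job $j$ (capacity $1$), $j\to u$ for each $u\in\Gamma(j)\cap A$ with cost $0$ if $u=\hat\sigma(j)$ and cost $1$ otherwise, and $u\to$ sink with capacity $\kappa_u$. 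We ask for a flow saturating all small jobs at minimum cost $f_c$. By integrality of min-cost flow, an optimal integral flow gives an actual assignment, and we accept iff $k_c+f_c\le M$. We run one flow per DP state, giving the $(2M+1)^{a-1}\Phi_{\text{flow}}(n,m)$ term. Machines outside $A$ keep load $\hat L_i\le T$ by $O(T)\subseteq A$ and condition~(1), exactly as in Lemma~\ref{lem:fixed-dp-load}.

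For correctness, soundness is immediate: the combined assignment satisfies (1)--(2), moves $k_c+f_c\le M$ jobs, and every machine in $A$ has load at most $\hat L^B+p_Bc_r+\kappa_r=T$. Completeness is the key step. Given $\sigma'$ satisfying (1)--(4), let $c'$ be its big-job net-change vector, which lies in the state space because of the $M$ bound. Then $\DP[q,c']$ is at most the number of big jobs moved by $\sigma'$. The small-job part of $\sigma'$ is a feasible flow for capacities $\kappa(c')$, because $\sigma'$ has makespan at most $T$, so $f_{c'}$ is at most the number of small jobs moved by $\sigma'$. Hence $k_{c'}+f_{c'}\le M$ and the state $c'$ is accepted.

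The subtle point, and the main thing to verify, is that minimizing big moves and small moves separately is safe. This holds because the capacities $\kappa$ depend only on $c$, not on which big jobs realize it. So the DP's minimum for a fixed $c$ and the flow's minimum for that same $c$ combine additively, with no exchange argument needed. I would also check that the implicit last coordinate stays within $[-M,M]$, which follows from the same in/out counting argument, and that reconstruction from DP back-pointers together with the flow decomposition takes time within the stated bound.
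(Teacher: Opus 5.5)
Your proposal is correct and follows essentially the same argument as the paper. The paper also runs a big-job DP in $p_B$-units over $(2M+1)^{a-1}$ states and then, for each accepted terminal state, solves a min-cost flow for the unit jobs with residual capacities $T-\hat L^{b}_{i_r}-p_B x_r$. Its completeness proof rests on the same observation you single out: these capacities depend only on the terminal state, not on which big-job assignment realizes it, so the DP minimum and the flow minimum can be added.
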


\begin{proof}
Partition $J_A = J_A^b \cup J_A^s$ into big and small jobs.

Since $O(T)\subseteq A$, every machine $i\notin A$ satisfies $\hat{L}_i\le T$.
Under conditions~(1)--(2), the assignment outside $A$ is unchanged, hence
$L_i(\sigma)=\hat{L}_i\le T$ for all $i\notin A$.
Therefore, among schedules satisfying (1)--(2), it suffices to enforce $L_i(\sigma)\le T$ for all $i\in A$.

\paragraph*{Phase 1 (Big-job DP).}
We construct a DP over $J_A^b$ only.
Since every big job has size $p_B$, the net load change $b_r$ on each machine $i_r\in A$
due to big-job reassignment is a multiple of $p_B$.
Since $\moveJobs(\sigma,\hat{\sigma})\le M$, the number of moved big jobs is at most $M$,
so $b_r/p_B \in [-M,M]\cap \mathbb{Z}$.
We therefore store
\[
x_r = b_r/p_B,
\]
giving a state space of size $(2M+1)^{a-1}$.

As in Lemma~\ref{lem:fixed-dp-load}, we store only $x=(x_1,\dots,x_{a-1})$ and define the last coordinate
implicitly by
$x_a(x)=-\sum_{r=1}^{a-1} x_r$.

Order $J_A^b = \{j_1,\dots,j_q\}$.
For $\ell\in\{0,\dots,q\}$ and each state $x$, the DP stores the minimum number of moved
big jobs incurred by assigning the first $\ell$ jobs to machines in $\Gamma(\cdot)\cap A$
while inducing net changes $x$ on machines $i_1,\dots,i_{a-1}$.
Transitions are exactly as in Lemma~\ref{lem:fixed-dp-jobs-rounded}, but in $p_B$-units:
for job $j$ and destination $u\in \Gamma(j)\cap A$, the state is updated by
\[
x_r' = x_r + \mathbf{1}[u=i_r]-\mathbf{1}[\hat{\sigma}(j)=i_r]
\qquad \forall r\in [a-1],
\]
and the transition cost is
\[
\cost_J(j,u)=
\begin{cases}
0 & \text{if } u=\hat{\sigma}(j),\\
1 & \text{otherwise.}
\end{cases}
\]

A final state $(q,x)$ is accepted if the implied big-job loads satisfy
\[
\hat{L}_{i_r}^b + p_B x_r \le T
\qquad \text{for all } r\in [a],
\]
where $x_a=x_a(x)$, $\hat{L}_{i_r}^b$ is the load of big jobs on machine $i_r$ according to $\hat\sigma$, and if the DP value at $(q,x)$ is at most $M$.
If an accepted state exists, let $\sigma^b$ be a reconstructed big-job assignment for that state
with moved-job cost equal to the stored DP value $M_b$.
Define the residual capacity/load on each machine $i_r\in A$ by
\[
R_{i_r} = T-\sum_{j\in J_A^b:\,\sigma^b(j)=i_r} p_B .
\]

\paragraph*{Phase 2 (Small-job min-cost flow).}
For each accepted big-job state, we construct a flow network as follows.
The source sends one unit to each job $j\in J_A^s$.
Each such job $j$ connects to every machine $i\in \Gamma(j)\cap A$ with capacity $1$ and cost equal to
$\mathbf{1}[i\neq \hat{\sigma}(j)]$.
Each machine $i_r$ connects to the sink with capacity $\min\{R_{i_r},|J_A^s|\}$.
We compute a minimum-cost integral flow of value $|J_A^s|$. By integrality, this flow induces an assignment of each job in $J_A^s$ to a single machine in $\Gamma(j)\cap A$.
If for some accepted big-job state this flow has cost at most $M-M_b$, then combining it with
the corresponding $\sigma^b$ yields a schedule satisfying (1)--(4), which we output.

If no accepted big-job state yields such a flow, we output \textsc{Fail}.

\paragraph*{Correctness.}
Any output schedule clearly satisfies (1) and (2).
Its moved-job count is $M_b$ plus the flow cost, hence at most $M$, so (3) holds.
For machines outside $A$, the load remains unchanged and is at most $T$.
For a machine $i_r\in A$, the big-job load is exactly $T-R_{i_r}$, and the flow sends at most
$R_{i_r}$ unit-size small jobs to $i_r$.
Thus its final load is at most $T$, proving (4).

Conversely, suppose there exists a feasible schedule $\sigma'$ satisfying (1)--(4).
Let $M_b'$ be the number of moved big jobs in $\sigma'$.
Then $\sigma'|_{J_A^b}$ defines a valid path in the big-job DP constructed above, and the resulting final
state is accepted because the big-job load on each machine is at most its total load under
$\sigma'$, hence at most $T$.

Let $x$ be the accepted terminal state induced by $\sigma'|_{J_A^b}$, and let $M_b$ be the DP
value stored at $x$.
By definition of the DP, we have $M_b \le M_b'$.
Moreover, for a fixed accepted state $x$, the quantities
$\hat{L}_{i_r}^b + p_B x_r$ and hence $R_{i_r} = T-(\hat{L}_{i_r}^b + p_B x_r)$
depend only on $x$, not on the particular reconstructed big-job assignment for that state. Since $x$ is the state induced by $\sigma'|_{J_A^b}$, we have
\[
R_{i_r} = T-L_{i_r}^b(\sigma')
\qquad \text{for all } r\in [a].
\]
Since small jobs have size $1$ and $L_{i_r}(\sigma')\le T$, the number of small jobs assigned by
$\sigma'$ to $i_r$ is at most $R_{i_r}$.
Hence $\sigma'|_{J_A^s}$ defines a feasible flow of value $|J_A^s|$, and its cost is exactly the
number of moved small jobs:
\[
\moveJobs(\sigma',\hat{\sigma})-M_b'
\le M-M_b'
\le M-M_b.
\]
Therefore the procedure succeeds whenever a schedule satisfying (1)--(4) exists.

\paragraph*{Running time.}
The DP has $|J_A^b|+1$ possible values of $\ell$, and for each value at most $(2M+1)^{a-1}$ states, with at most $a$
transitions per state, giving time 
$O\!\left(|J_A|\cdot a \cdot (2M+1)^{a-1}\right)$.

For each accepted terminal state, we use an exact algorithm to solve the min-cost flow instance that arises in time $\Phi_{\text{flow}}(n,m)$, which is polynomial in
$n,m$. Therefore, the total additional time is
$(2M+1)^{a-1}\cdot \Phi_{\text{flow}}(n,m)$.
This proves the claimed running time. 
\end{proof}

We can now define $\ExactRepairOracle_J(T,M,\hat\sigma)$ with the fixed-set oracle of Lemma~\ref{lem:fixed-set-exact} as we did for $\RepairOracleJ$ in Definition~\ref{def:oracle-jobs-rounded}. Then we can use Algorithm~\ref{alg:global-repair-jobs} with $\ExactRepairOracle_J$ and define \textsc{ExactGlobalRepairJobs}.

By following closely the proofs of Lemma~\ref{lem:oracle-jobs-rounded-correct},
Lemma~\ref{lem:oracle-jobs-rounded-runtime}, and
Theorem~\ref{thm:global-repair-jobs}, and replacing
$\RepairOracleJ$ by $\ExactRepairOracle_J$, we obtain similar conclusions with the stronger
guarantee $\Cmax(\sigma)\le T$.
For the min-cost flow computation in Phase~2, the underlying graph has at most $\bigO(n+m)$ vertices
and $\bigO(nm)$ edges, with integral demands, costs in $\{0,1\}$, and capacities at most $n$. Using the exact min-cost flow algorithm of~\cite{flow2}, such an instance can be solved in $(nm)^{1+o(1)} \log_2 n$ time.

Consequently, in the two-valued case we obtain the following corollary.
\begin{corollary}[Running time of \textsc{ExactGlobalRepairJobs}] \label{cor:time-exact-jobs}
Assume the two-valued case $p_j\in\{1,p_B\}$, $T\ge \OPT$ and $\Delta(T)>0$. Then \textsc{ExactGlobalRepairJobs} (Algorithm~\ref{alg:global-repair-jobs} with
$\ExactRepairOracle_J$) returns a feasible schedule
$\sigma$ with $\Cmax(\sigma)\le T$ in time
$\widetilde{\bigO}\!\left(
(nm)^{1+o(1)} \cdot m^{\bigO(\ErrJ)} \cdot \ErrJ^{\bigO(\ErrJ)}
\right)$.
\end{corollary}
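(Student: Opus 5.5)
The plan is to mirror the proof of Corollary~\ref{cor:time-load}, but with the exact fixed-set oracle of Lemma~\ref{lem:fixed-set-exact} in place of the rounded one. First, I would establish the analogue of Theorem~\ref{thm:global-repair-jobs} for \textsc{ExactGlobalRepairJobs}. Since $\Delta(T)>0$ and $T\ge\OPT$, there is a feasible schedule $\sigma'$ with $\Cmax(\sigma')\le T$ and $\moveJobs(\sigma',\hat\pi)=\ErrJ(T)$. By Lemma~\ref{lem:proj-jobs}, $\moveJobs(\sigma',\hat\sigma)\le \ErrJ(T)$. The doubling sequence $M=1,2,4,\dots$ reaches a first $M^*\ge \ErrJ(T)$ with $M^*<2\ErrJ(T)\le 2\ErrJ$, after $\bigO(\log \ErrJ)$ calls.

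For the exact oracle, I would argue correctness as in Lemma~\ref{lem:oracle-jobs-rounded-correct}. Lemma~\ref{lem:over-incident} gives $O(T)\subseteq A_{\mathrm{inc}}(\sigma',\hat\sigma)$, and Lemma~\ref{lem:inc-size-jobs} gives $|A_{\mathrm{inc}}|\le 2M^*$. So the enumeration hits $A=A_{\mathrm{inc}}(\sigma',\hat\sigma)$, and $\sigma'$ satisfies conditions (1)--(4) of Lemma~\ref{lem:fixed-set-exact} for this $A$. By the converse part of that lemma the procedure succeeds, and any output satisfies $\Cmax\le T$. Hence the algorithm stops no later than budget $M^*$ with a schedule of makespan at most $T$. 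I would also note that replacing $T$ by $\lfloor T\rfloor$ is harmless, as argued at the start of the subsection.

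For the running time, each oracle call with budget $M$ enumerates at most $(2M+1)m^{2M}$ sets $A$, each with $a\le 2M$. Lemma~\ref{lem:fixed-set-exact} charges $\bigO(n\cdot 2M\cdot(2M+1)^{2M-1}+(2M+1)^{2M-1}\Phi_{\text{flow}}(n,m))$ per set. Substituting $\Phi_{\text{flow}}=(nm)^{1+o(1)}\log n$ from~\cite{flow2}, the flow term dominates the DP term. A call therefore costs $(nm)^{1+o(1)}\log n\cdot m^{2M}(2M+1)^{\bigO(M)}$. Summing over the $\bigO(\log\ErrJ)$ calls, the last one dominates since the cost grows geometrically in $M$. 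With $M<2\ErrJ$ this gives $(nm)^{1+o(1)}\cdot m^{\bigO(\ErrJ)}\cdot\ErrJ^{\bigO(\ErrJ)}$, with the logarithmic factors absorbed into $\widetilde{\bigO}$.

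The main subtlety is not the arithmetic but checking that the exact oracle's DP over big jobs still has state space $(2M+1)^{a-1}$ independently of $p_B$. This holds because states count moved big jobs in $p_B$-units, and Lemma~\ref{lem:fixed-set-exact} already guarantees it. A second point is the edge case $\ErrJ=1$, where $\ErrJ^{\bigO(\ErrJ)}$ should be read as $\bigO(1)$. I would handle it by writing the bound as $(\ErrJ+1)^{\bigO(\ErrJ)}$ where necessary.
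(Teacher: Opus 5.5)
Your proposal is correct and takes essentially the same route as the paper. The paper also proves this corollary by rerunning the arguments of Lemma~\ref{lem:oracle-jobs-rounded-correct}, Lemma~\ref{lem:oracle-jobs-rounded-runtime} and Theorem~\ref{thm:global-repair-jobs} with the exact fixed-set oracle of Lemma~\ref{lem:fixed-set-exact}, then plugging in the $(nm)^{1+o(1)}\log_2 n$ min-cost flow bound of~\cite{flow2}. Your write-up is more explicit than the paper's. The one step you leave tacit is that $\Delta(T)>0$ gives $\ErrJ(T)\ge 1$, which $M^*<2\ErrJ(T)$ requires, and it is immediate.
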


\section{Proof of the Hardness Result}
\label{app:repair-hardness}

\begin{proof}[Proof of Theorem~\ref{thm:exact-moved-load-repair-hardness}]
We reduce from \textsc{Multicolored Clique}. We are given a $k$-partite graph $G=(V_1\cup\cdots\cup V_k,E)$, with $k \ge 2$, and ask whether there are vertices $v_i\in V_i$, for all $i\in[k]$, such that $\{v_i,v_j\}\in E$ for every $i\ne j$. This problem is $\mathrm{W[1]}$-hard parameterized by $k$~\cite{cygan}.

Given an instance $G=(V_1\cup\cdots\cup V_k,E)$, we construct an instance of
\textsc{Exact Moved-Load Repair}. We create a vertex machine $M_v$ for every
vertex $v$, an edge machine $M_e$ for every edge $e$, a pair-source machine
$P_{ij}$ for every pair $1\le i<j\le k$, and one sink machine $S$.

Let $T = 2|E|+k(k-1)$.
We construct a feasible predicted schedule $\hat\sigma$ with loads
\[
    L_{M_v}(\hat\sigma)=T,\qquad
    L_{M_e}(\hat\sigma)=T,\qquad
    L_{P_{ij}}(\hat\sigma)=T+2,\qquad
    L_S(\hat\sigma)=T-k(k-1).
\]

The non-dummy jobs are as follows. For every vertex $v$, create a blocker job
$b_v$ of size $k-1$, assigned in $\hat\sigma$ to $M_v$, with
$\Gamma(b_v)=\{M_v,S\}$.

For every edge $e=\{u,v\}$, create two endpoint jobs $a_{e,u}$ and $a_{e,v}$ of
size $1$, both assigned in $\hat\sigma$ to $M_e$, with
\[
    \Gamma(a_{e,u})=\{M_e,M_u\},
    \qquad
    \Gamma(a_{e,v})=\{M_e,M_v\}.
\]
For every color pair $i<j$ and every edge $e$ between $V_i$ and $V_j$, create a
pair-edge job $q_{ij,e}$ of size $2$, assigned in $\hat\sigma$ to $P_{ij}$, with
$\Gamma(q_{ij,e})=\{P_{ij},M_e\}$.

Finally, add unit-size dummy jobs, each eligible only on its predicted machine,
so that the loads above are achieved. By the definition of $T$, the required
dummy loads are nonnegative. Set $K=3k(k-1)$.

The total overload of the pair-source machines above $T$ is
$2\binom{k}{2}=k(k-1)$, exactly the slack of $S$. All vertex and edge machines have load $T$. Hence the total load is $|\mathcal M|T$, where $\mathcal M$ is the set of
machines, and therefore any schedule of makespan at most $T$ has every machine
at load exactly $T$.

Suppose first that $G$ has a multicolored clique with vertices
$v_1,\ldots,v_k$. For every $i$, move $b_{v_i}$ to $S$. For every pair $i<j$,
let $e_{ij}=\{v_i,v_j\}$, move $q_{ij,e_{ij}}$ to $M_{e_{ij}}$, and move the two
endpoint jobs $a_{e_{ij},v_i}$ and $a_{e_{ij},v_j}$ to $M_{v_i}$ and
$M_{v_j}$, respectively. The moved load is
\[
    k(k-1)+2\binom{k}{2}+2\binom{k}{2}=3k(k-1)=K.
\]
The load of each $P_{ij}$ decreases by $2$, each used edge machine receives
load $2$ and sends out load $2$, each selected vertex machine sends out its
blocker of size $k-1$ and receives $k-1$ endpoint jobs, and $S$ receives total
load $k(k-1)$. Thus every machine ends with load $T$.

Conversely, suppose there is a feasible schedule $\sigma$ with
$\Cmax(\sigma)\le T$ and $\move(\sigma,\hat\sigma)\le K$. As argued above,
every machine has final load exactly $T$. Each pair-source machine $P_{ij}$
starts with load $T+2$, no job can move into it, and its movable jobs all have
size $2$. Thus, exactly one pair-edge job leaves it. Let this job be
$q_{ij,e_{ij}}$. Then $M_{e_{ij}}$ receives load $2$ and must send out exactly
two units of load. The only movable jobs initially on $M_{e_{ij}}$ are its two
unit-size endpoint jobs, so both endpoint jobs of $e_{ij}$ move to their
corresponding vertex machines.

Fix a color class $V_i$. Exactly one selected edge is chosen for each of the
$k-1$ color pairs involving $i$, and hence exactly $k-1$ endpoint jobs are sent
to vertex machines of color $i$. Let $t_v$ be the number of such endpoint jobs
received by $M_v$, for $v\in V_i$. Then $\sum_{v\in V_i}t_v=k-1$. Since every
vertex machine starts and ends with load $T$, if $t_v>0$, then $M_v$ must lose
exactly $t_v$ units of load. The only movable job initially on $M_v$ is the
blocker $b_v$ of size $k-1$, so $t_v\in\{0,k-1\}$. Therefore there is a unique
vertex $v_i\in V_i$ with $t_{v_i}=k-1$.

For every pair $i<j$, the endpoint jobs of $e_{ij}$ move to the vertex machines
corresponding to the endpoints of $e_{ij}$. Since all endpoint jobs sent to
color class $V_i$ go to $M_{v_i}$, and all endpoint jobs sent to color class
$V_j$ go to $M_{v_j}$, we have $e_{ij}=\{v_i,v_j\}$. Thus
$v_1,\ldots,v_k$ form a multicolored clique. 
Since $K=3k(k-1)$ depends only on $k$, the theorem follows.
\end{proof}

The same construction also gives hardness for the moved-job budget.

\begin{corollary}
\label{cor:exact-moved-job-repair-hardness-clique}
\textsc{Exact Moved-Job Repair} for \textsc{Restricted Assignment} is
$\mathrm{W[1]}$-hard parameterized by the moved-job budget $M$, even when
$\hat\sigma$ is feasible.
\end{corollary}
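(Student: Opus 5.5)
We reuse the reduction from \textsc{Multicolored Clique} in the proof of Theorem~\ref{thm:exact-moved-load-repair-hardness} unchanged: the same machines, jobs, eligibility sets, predicted schedule $\hat\sigma$ and threshold $T=2|E|+k(k-1)$. Only the budget changes: we set the moved-job budget to $M=k+3\binom{k}{2}=k+\tfrac{3}{2}k(k-1)$. This depends only on $k$, so the parameter is preserved and the instance is still built in polynomial time.

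\emph{Forward direction.} Let $v_1,\dots,v_k$ be a multicolored clique. We use exactly the schedule from the moved-load proof. It moves the $k$ blockers $b_{v_i}$, one pair-edge job $q_{ij,e_{ij}}$ per pair $i<j$, and the two endpoint jobs of each $e_{ij}$. That is $k+\binom{k}{2}+2\binom{k}{2}=M$ jobs, and it was already shown that every machine ends at load exactly $T$.

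\emph{Backward direction.} This part does not use the budget at all. The earlier argument only needs feasibility together with $\Cmax(\sigma)\le T$. The total load equals $|\mathcal M|T$, so every machine ends at load exactly $T$, and the chain of deductions follows from that:
\begin{itemize}
\item each $P_{ij}$ sends out exactly one size-$2$ job;
\item each chosen edge machine must forward both of its endpoint jobs;
\item each vertex machine can only compensate by expelling its blocker of size $k-1$, so $t_v\in\{0,k-1\}$;
\item hence there is a unique $v_i$ per color class, and $e_{ij}=\{v_i,v_j\}$.
\end{itemize}
I would re-read that argument to confirm that the bound $\move(\sigma,\hat\sigma)\le K$ is never invoked. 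If it is not, any feasible schedule with $\Cmax(\sigma)\le T$, whatever its number of moved jobs, yields a multicolored clique. Dummy jobs are fixed to their predicted machines, so they never affect the count.

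The only step that needs care is this check that the converse is budget-independent. If some step did rely on the moved-load budget, I would instead get it from the job budget, since every moved job has size at most $k-1$. The cleaner route, though, is the load-exactness argument, which already forces the structure. The W[1]-hardness in $M$ then follows because $M$ is a function of $k$ alone.
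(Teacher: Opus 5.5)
Your proposal is correct and is essentially the paper's proof: the same reduction with budget $M=k+3\binom{k}{2}$, the same forward count, and a converse that rests only on the load-exactness forcing argument. The paper additionally notes that the forced moves already use up all $M$ moves, which, as you observe, is not needed; the check you defer does succeed, since the converse of Theorem~\ref{thm:exact-moved-load-repair-hardness} never invokes $\move(\sigma,\hat\sigma)\le K$, and it is just as well your fallback is unnecessary, because as stated it would not work (for $k=2$ the pair-edge jobs exceed size $k-1$, and for $k\ge 3$ one has $M(k-1)>K$).
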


\begin{proof}
Use the same reduction and set
$M=k+3\binom{k}{2}$.

In the forward direction, the repair moves exactly $k$ blocker jobs,
$\binom{k}{2}$ pair-edge jobs, and $2\binom{k}{2}$ endpoint jobs, for a total
of $M$ moved jobs.

Conversely, any feasible repair of makespan at most $T$ must make exactly the
same forced moves as in the proof of
Theorem~\ref{thm:exact-moved-load-repair-hardness}: one pair-edge job leaves
each $P_{ij}$, the two corresponding endpoint jobs leave the selected edge
machine, and exactly $k$ blocker jobs move to the sink. These already account
for $k+3\binom{k}{2}=M$ moved jobs, so no additional job can be moved. The same
consistency argument then yields a multicolored clique.
\end{proof}

\end{document}